\documentclass[a4paper, thm-restate, 11pt]{article}
\usepackage[bottom,para,symbol]{footmisc}
\bibliographystyle{plainurl}
\input{preamble}

\title{Arborescences and Shortest Path Trees when Colors Matter\thanks{This work is supported by SERB CRG grant number CRG/2022/007751.\hfill}}

\author{P.~S.~Ardra\footnote{111914001@smail.iitpkd.ac.in}~}

\author{Jasine~Babu\footnote{jasine@iitpkd.ac.in}~}

\author{Kritika~Kashyap\footnote{112202003@smail.iitpkd.ac.in}~}

\author{R.~Krithika\footnote{krithika@iitpkd.ac.in}~}

\author{Sreejith~K.~Pallathumadam\footnote{111704002@smail.iitpkd.ac.in}~}

\author{Deepak~Rajendraprasad\footnote{deepak@iitpkd.ac.in}}

\affil{Indian Institute of Technology Palakkad, Palakkad, India.}

\theoremstyle{plain}
\newtheorem{theorem}{Theorem}

\newtheorem{definition}[theorem]{Definition}

\newtheorem{reduction rule}{Reduction Rule}[section]

\usepackage{hyperref}
\date{}

\begin{document}
\maketitle
\begin{abstract}
We are given an edge-colored (directed or undirected) graph and our objective is to find a specific type of subgraph, like a spanning tree, an arborescence, a single-source shortest path tree, a perfect matching etc., with constraints on the number of edges of each color. Some of these problems, like color-constrained spanning tree, have elegant solutions and some of them, like color-constrained perfect matching, are longstanding open questions. In this work, we study color-constrained arborescences and shortest path trees. Computing a color-constrained shortest path tree on weighted digraphs turns out to be \NP-hard in general but polynomial-time solvable when all cycles have positive weight. This polynomial-time solvability is due to the fact that the solution space is essentially the set of all color-constrained arborescences of a directed acyclic subgraph of the original graph. While finding color-constrained arborescence of digraphs is \NP-hard in general, we give an efficient algorithm when the input digraph is acyclic. Consequently, a color-constrained shortest path tree on weighted digraphs having only positive weight cycles can be efficiently computed. En route, we sight nice connections to colored matroids and color-constrained bases. Our algorithm also generalizes to the problem of finding a color-constrained shortest path tree with the minimum total weight. Both our algorithms use a single source shortest path algorithm and a (minimum cost) maximum flow algorithm as subroutines. By using the recent algorithm by van den Brand et al. (FOCS 2023) for these subroutines, our algorithms achieve near-linear running time when the edge weights are integral and polynomially-bounded in the size of the graph.
\end{abstract}
\section{Introduction}
\label{sec:intro}
Edge-colored graphs model scenarios where different kinds of binary relationships exist among the underlying set of entities. 
We call a graph whose edges are colored with elements of $[q]:=\{1,2,\ldots,q\}$ as a {\em $q$-colored graph} and in this work we focus on finding color-constrained subgraphs in $q$-colored graphs. For example, consider the problem where given an undirected $q$-colored graph $G$ and a vector $\alpha=(\alpha_1,\ldots,\alpha_q) \in \mathbb{N}^q$, the objective is to find a spanning tree that has exactly $\alpha_i$ edges of color $i$ for each $i \in[q]$. Papadimitriou and Yannakakis \cite{PapadimitriouY82} gave an elegant solution to this problem when $q=2$ by efficiently constructing a sequence of spanning trees $T_{min}=T_1,T_2,\ldots,T_{max}$. Here, $T_{min}$ and $T_{max}$ denote spanning trees with minimum and maximum number of edges of color $1$, and $T_{i+1}$ is obtained from $T_i$ by replacing an edge colored 2 with an edge colored 1. One of the spanning trees in this sequence is a solution (if one exists). It turns out that this ``exchange sequence'' exists in the more general setting of matroids as observed by Gabow and Tarjan~\cite{GabowT84} who used it to give a polynomial-time algorithm that computes a minimum weight spanning tree that has $\alpha_1$ edges of color~$1$ and $\alpha_2$ edges of color~$2$. An alternative algorithm for the same problem was given by Gusfield~\cite{Gusfield84}. Later, Rendl and Leclerc gave a polynomial-time algorithm for arbitrary number of colors~\cite{RendlL89}. Since the set of all forests in $G$ forms a graphic matroid on $E(G)$, the spanning tree that we seek is a largest ($|V(G)|-1$ sized) common independent set of this graphic matroid and the partition matroid encoding the color constraints.  Thus, the problem of finding a spanning tree that has exactly $\alpha_i$ edges of color $i$ for each $i \in[q]$ may be viewed as a \textsc{2-Matroid Intersection} problem of determining a largest common independent set of two matroids which is well-known to be polynomial-time solvable~\cite{Schrijver03}. 

Finding a color-constrained spanning out-tree rooted at $s$ (\textit{$s$-arborescence} that has exactly $\alpha_i$ edges of color $i$) of a $q$-colored digraph $G$ has an interesting difference in complexity when compared to its undirected analogue. Barahona and Pulleyblank~\cite{BarahonaP87} gave a polynomial-time algorithm for this problem when $q=2$. The set of all out-forests of $G$ (with no edges entering $s$) do not necessarily form a matroid with $E(G)$ as the ground set. However, an out-forest of $G$ (with no edges entering $s$) is a common independent set of two matroids; the first one being a graphic matroid (obtained by ignoring the directions in $G$) and the second one being a partition matroid of all subgraphs of $G$ in which no edges enter $s$ and at most one edge enters every other vertex. Then the desired $s$-arborescence is a solution to \textsc{3-Matroid Intersection} (where the objective is to determine a largest common independent set of three matroids). \textsc{3-Matroid Intersection} is \NP-complete~\cite{Schrijver03} in general and we observe that this hardness remains even in the special case of finding a color-constrained $s$-arborescence (Theorem~\ref{thm:col-arb-hard}). The complexity of  finding color-constrained $s$-arborescences in $q$-colored digraphs for a fixed value of $q>2$ remains open.

This prompts one to study the complexity of finding color-constrained shortest path trees which are a very commonly occurring type of arborescences. Interestingly, this problem turns out to be efficiently solvable even when there are upper and lower bound constraints on the number of edges of each color, as long as all cycles in the digraph have positive weight. Before we define this problem formally, we introduce some related terminology. Given a vector $\alpha=([\alpha_1,\alpha'_1],\ldots,[\alpha_q,\alpha'_q]) \in (\mathbb{N} \times \mathbb{N})^q$ of intervals (referred to as a {\em color-constraint vector}), a $q$-colored graph is said to be an {\em $\alpha$-colored graph} if for each $i \in[q]$ it has at least $\alpha_i$ but at most $\alpha'_i$ edges of color $i$. We call a graph associated with an edge weighting $w:E(G) \rightarrow \mathbb{R}$ a $w$-weighted graph. A \emph{shortest path tree} rooted at $s$ (referred to as $s$-shortest path tree and abbreviated as $s$-SPT) of $G$ is an out-tree such that for every vertex $v$ reachable from $s$, the $s$-$v$-path in the tree is a shortest (minimum weight) $s$-$v$-path in $G$. Recall that vertices are not allowed to repeat in a path. For simplicity, we assume that all vertices of $G$ are reachable from the source vertex $s$ and therefore $|E(G)| \geq |V(G)|-1$. Further, we assume that $q \leq |E(G)|$.

\parprobNoPar{\textsc{Color-Constrained Shortest Path Tree} (\textsc{CC-SPT})}{}{A $q$-colored $w$-weighted digraph $G$ (with possibly parallel edges), a vertex $s\in V(G)$ designated as source and a color-constraint vector $\alpha=([\alpha_1,\alpha'_1],\ldots,[\alpha_q,\alpha'_q]) \in (\mathbb{N} \times \mathbb{N})^q$.}{Does $G$ have an $\alpha$-colored $s$-shortest path tree?}

 In graphs with negative-weight cycles, subpaths of a shortest path are not necessarily shortest paths. Consequently, the set of shortest paths from a single source may not even define a tree. While shortest path trees exist and can be computed efficiently for graphs without negative-weight cycles (c.f. using Bellman-Ford algorithm), we show that \textsc{CC-SPT} is \NP-hard for this class (Theorem~\ref{thm:spt-q-col-NP-h}) by a reduction from \textsc{Hamiltonian Path}. These make one curious about the complexity of \textsc{CC-SPT} when all cycles have positive weight. Clearly, not every $s$-arborescence of $G$ is an $s$-SPT of $G$. Nevertheless, there is a subgraph $G_s$ of $G$ that contains every $s$-SPT of $G$ with the additional property that every $s$-arborescence of $G_s$ is an $s$-SPT of $G$. We may choose $G_s$ to be the subgraph of $G$ where for any pair $u,v$ of vertices, $(u,v)$ is an edge of $G_s$ if and only if $(u,v)$ is the last edge on some shortest $s$-$v$-path in $G$. It follows that in order to find an $\alpha$-colored $s$-SPT of $G$, it suffices to find an $\alpha$-colored $s$-arborescence of $G_s$. This leads us to the \textsc{Color-Constrained Arborescence} problem defined below. Observe that $G_s$ can be computed by an algorithm that computes an $s$-SPT of $G$ ignoring colors. Further, if all cycles in $G$ are positive, then $G_s$ is a directed acyclic graph (DAG) (Observation~\ref{obs:SPS-DAG}) and therefore we only need to find an $\alpha$-colored $s$-arborescence in a DAG (which is $G_s$) in order to find the desired $s$-SPT of $G$. 

\parprobNoPar{\textsc{Color-Constrained Arborescence} (\textsc{CC-ARB})}{}{A $q$-colored digraph $G_s$ (with possibly parallel edges) where a vertex $s\in V(G_s)$ is designated as the root and a color-constraint vector $\alpha=([\alpha_1,\alpha'_1],\ldots,[\alpha_q,\alpha'_q]) \in (\mathbb{N} \times \mathbb{N})^q$.}{Does $G_s$ have an $\alpha$-colored $s$-arborescence?}

We remark that the graph $G_s$ in the definition of \textsc{CC-ARB} is an arbitrary digraph (and not one necessarily arising from the graph of a \textsc{CC-SPT} instance). Once again we assume that all vertices of $G_s$ (with $n$ vertices and $m$ edges) are reachable from $s$, $m \geq n-1$ and $q \leq m$. As mentioned before, \textsc{CC-ARB} is \NP-complete (Theorem~\ref{thm:col-arb-hard}) even if $\alpha_i=0$ for each $i$. However, we show that \textsc{CC-ARB} on a $q$-colored DAG can be efficiently solved by modeling the problem as a flow problem on an auxiliary network. A {\em flow network} is a digraph $H$ with two distinguished vertices, source $s^*$ (with no incoming edges) and sink $t^*$ (with no outgoing edges), and two non-negative functions $c: V(H) \times V(H) \rightarrow \mathbb{R}^{\geq 0}$ ({\em capacity function}) and $d: V(H) \times V(H) \rightarrow \mathbb{R}^{\geq 0}$ ({\em demand function}) such that $c(x,y)>0$ or $d(x,y)>0$ only when $(x,y) \in E(H)$. Optionally, a flow network $H$ may also be associated with a cost function $\widehat{c}:E(H) \rightarrow \mathbb{R}^{\geq 0}$. A {\em flow} in $H$ is a function $f: V(H) \times V(H) \rightarrow \mathbb{R}^{\geq 0}$ such that for each $(x,y) \in V(H) \times V(H)$, $0 \leq f(x,y) \leq c(x, y)$ ({\em non-negativity and capacity constraints}) and for each $x \in V(H)\setminus \{s^*, t^*\}$, $\sum_{y \in V(H)} f(x,y)=\sum_{y \in V(H)} f(y,x)$ ({\em flow conservation constraints}). The value of a flow $f$ is defined as $|f|=\sum_{v \in V(H)} f(s^*,v)$. A \textit{maximum flow} is one with maximum value and a \textit{minimum cost maximum flow} is a maximum flow that minimizes $\sum_{(u,v) \in E(H)} \widehat{c}(u,v) f(u,v)$. A \textit{feasible flow} is a flow $f$ such that for each $(x,y) \in V(H) \times V(H)$, $f(x,y) \geq d(x, y)$ ({\em demand  constraints}). Now, we are ready to define the problem that is of interest to us.

\parprobNoPar{\textsc{Feasible Flow}}{}{A flow network $(H,s^*,t^*,c,d)$ and a real number $k$.}{Does $H$ have a feasible flow of value $k$?}

 Our algorithm for \textsc{CC-ARB} on a DAG is based on computing a feasible flow in an auxiliary flow network which in turn translates into a maximum flow computation in another network. Given a DAG $G_s$ on $n$ vertices and $m$ edges, we first construct in linear time a flow network $H$ (on $\widehat{n}=\mathcal{O}(n+q)$ vertices which is $\mathcal{O}(m)$ and $\widehat{m}=\mathcal{O}(\min(m,nq))$ edges which is also $\mathcal{O}(m)$) with demand constraints such that $G_s$ has an $\alpha$-colored $s$-arborescence if and only if $H$ has a feasible flow of value $n-1$. Next, we construct in linear (in $\widehat{n}$ and $\widehat{m}$) time another flow network $H'$ with no demand constraints such that $H$ has a feasible flow of value $n-1$ if and only if $H'$ has a flow of value $n-1+\sum_{i=1}^q \alpha_i$. Finally, using a simple transformation we obtain the required $s$-arborescence of $G_s$ from the flow computed in $H'$. The overall running time of the algorithm for \textsc{CC-ARB} is $\mathcal{O}(m+T_{MF}(\widehat{n},\widehat{m}))$ (Theorem~\ref{thm:arb-q-col-flow}) where $T_{MF}(n,m)$ denotes the running time of an algorithm that computes an integral maximum flow in a network with $n$ vertices and $m$ edges. This results in an $\mathcal{O}(T_{SSP}(n_G,m_G)+T_{MF}(n_G+q,\min(m_G,n_Gq)))$-time algorithm for \textsc{CC-SPT} on digraphs $G$ on $n_G$ vertices, $m_G$ edges and only positive cycles (Theorem~\ref{thm:spt-q-col-flow}) where $T_{SSP}(n,m)$ denotes the running time of an algorithm that computes an $s$-SPT of a digraph on $n$ vertices and $m$ edges.

 It is well-known that the single source shortest path problem reduces to a minimum cost maximum flow computation\footnote{To find distances from $s$ to every other vertex in $G$, it suffices to solve for a minimum cost flow of value $|V(G)|-1$ in the network $G'$ obtained from $G$ by adding a new sink vertex $t'$ and a unit-capacity zero-cost edge from every vertex in $V(G)\setminus \{s\}$ to $t'$. The edges of $G$ that are in $G'$ have capacities $|V(G)|-1$ and costs equal to their weights.}. By using the algorithm by van den Brand et al.~\cite{Brand0PKLGSS23} for \textsc{Minimum Cost Maximum Flow} as a subroutine twice (once for computing single source shortest path distances\footnote{One may use recent near-linear time single source shortest path algorithms~\cite{BernsteinNW22,BringmannCF23}. However, these are randomized (Las Vegas) algorithms.} in $G$ and once for solving \textsc{CC-ARB} in $G_s$), it follows that our algorithm for \textsc{CC-SPT} has near-linear (in $m_G$) running time when the edge weights are integral and polynomially-bounded in the size of the graph provided the graph has only positive cycles. Note that if $w$ is non-negative, then simpler algorithms (such as Dijkstra's algorithm) may be used to compute single source shortest path distances without any increase in the overall running time. We also show that \textsc{CC-SPT} for $q=2$ can be solved as fast as finding an $s$-SPT in uncolored graphs and does not require a maximum flow  computation (Sec.~\ref{sec:rb}). 
 
 Finding an $s$-SPT of minimum total weight is polynomial-time solvable on non-negative weighted graphs~\cite{KhullerRY95}. We observe that our algorithm for finding an $\alpha$-colored $s$-SPT nicely extends to finding a minimum weight solution. We essentially reduce this \textsc{Minimum Weight Color Constrained Shortest Path Tree} (\textsc{Min CC-SPT}) problem to finding a minimum cost maximum flow. Then by invoking an algorithm that solves \textsc{Minimum Cost Maximum Flow}, we obtain an  algorithm (Sec.~\ref{sec:min-cc-spt}) for \textsc{Min CC-SPT}. We also give a simpler algorithm to solve the problem for the special case when $q=2$ (Sec.~\ref{sec:minrb}). Our results are summarized in Table~\ref{table1}.

\begin{table}
\centering
\resizebox{\linewidth}{!}{
\begin{tabular}{|cc|cc|}
\hline
\multicolumn{2}{|c|}{\textbf{Graph Class}} & \multicolumn{2}{c|}{\textbf{Running Time}} \\ \hline
\multicolumn{1}{|c|}{\textbf{Edges}} & \textbf{Weights} & \multicolumn{1}{c|}{\textsc{CC-SPT}} & \textsc{Min CC-SPT} \\ \hline
\multicolumn{1}{|c|}{\multirow{5}{*}{Directed}} & Uniform & \multicolumn{1}{c|}{\multirow{3}{*}{\begin{tabular}[c]{@{}c@{}}$\mathcal{O}(T_{SSP}(n,m) + T_{MF}(\widehat{n},\widehat{m}))$\\ $\mathcal{O}(T_{SSP}(n,m))$ when $q=2$\end{tabular}}} & \multirow{3}{*}{\begin{tabular}[c]{@{}c@{}}$\mathcal{O}(T_{SSP}(n,m) + T_{MCMF}(\widehat{n},\widehat{m}))$\\ $\mathcal{O}(T_{SSP}(n,m)+T_{sort}(n))$ when $q=2$\end{tabular}} \\ \cline{2-2}
\multicolumn{1}{|c|}{} & \multirow{2}{*}{\begin{tabular}[c]{@{}c@{}}All cycles have\\ positive weight\end{tabular}} & \multicolumn{1}{c|}{} &  \\
\multicolumn{1}{|c|}{} &  & \multicolumn{1}{c|}{} &  \\ \cline{2-4} 
\multicolumn{1}{|c|}{} & \begin{tabular}[c]{@{}c@{}}No negative-\\ weight cycles\end{tabular} & \multicolumn{1}{c|}{NP-Complete} & NP-Complete \\ \cline{2-4} 
\multicolumn{1}{|c|}{} & Arbitrary & \multicolumn{1}{c|}{SPT may not exist} & SPT may not exist \\ \hline
\multicolumn{1}{|c|}{\multirow{3}{*}{Undirected}} & Uniform & \multicolumn{1}{c|}{\multirow{2}{*}{\begin{tabular}[c]{@{}c@{}}$\mathcal{O}(T_{SSP}(n,m) + T_{MF}(\widehat{n},\widehat{m}))$\\ $\mathcal{O}(T_{SSP}(n,m))$ when $q=2$\end{tabular}}} & \multirow{2}{*}{\begin{tabular}[c]{@{}c@{}}$\mathcal{O}(T_{SSP}(n,m) + T_{MCMF}(\widehat{n},\widehat{m}))$\\ $\mathcal{O}(T_{SSP}(n,m)+T_{sort}(n))$ when $q=2$\end{tabular}} \\ \cline{2-2}
\multicolumn{1}{|c|}{} & Positive & \multicolumn{1}{c|}{} &  \\ \cline{2-4} 
\multicolumn{1}{|c|}{} & Arbitrary & \multicolumn{1}{c|}{SPT may not exist} & SPT may not exist \\ \hline
\end{tabular}
}
\smallskip
\caption{Running time of our algorithms for \textsc{CC-SPT} and \textsc{Min CC-SPT} on $q$-colored $w$-weighted graphs with $n$ vertices and $m$ edges. Here, $T_{sort}(n)$ denotes the running time of an algorithm to sort $n$ numbers, 
$T_{SSP}(n,m)$ denotes the running time of a single source shortest path algorithm on a graph with $n$ vertices and $m$ edges and $T_{MF}(\widehat{n},\widehat{m})$ (resp. $T_{MCMF}(\widehat{n},\widehat{m})$) denotes the running time of an algorithm that computes an integral maximum flow (resp. integral minimum cost maximum flow) in a network with $\widehat{n}=\mathcal{O}(n+q)$ vertices and $\widehat{m}=\mathcal{O}(\min(m,nq))$ edges. When edge weights of $G$ are integral and polynomially-bounded, the current fastest single source shortest path and (minimum cost) maximum flow algorithms have near-linear (in the number of edges) running times.} 
\label{table1}
\end{table}

\subparagraph*{Arborescences in DAGs.}

We observe that out-forests of a DAG rooted at a given vertex form a matroid $M$. Therefore, instances of \textsc{CC-ARB} on DAGs where the required solution has only upper bound color constraints can be solved using \textsc{2-Matroid Intersection}. As a consequence, we obtain polynomial-time algorithms for instances of \textsc{CC-SPT} which have only upper bound color constraints. Moreover, a solution to a general instance of \textsc{CC-ARB} on a DAG may be seen as a largest common independent set of the matroid $M$ and a generalized partition matroid, leading to a polynomial-time algorithm for finding the same~\cite{BrezovecCG88}. However, our flow-based algorithms are faster.

\paragraph*{Undirected Graphs. }In all our algorithms, though we assume that the input is a digraph, these algorithms also work for undirected graphs by the standard transformation into equivalent digraphs if the edge weighting is positive: every undirected edge between $u$ and $v$ with weight $w_{uv}$ is replaced by two directed edges $(u,v)$ and $(v,u)$ each of weight $w_{uv}$. However, in the presence of negative-weight edges, this transformation will introduce negative-weight cycles in the resulting digraph even though the original undirected graph has none. Finding shortest paths is sophisticated even in uncolored undirected graphs with negative-weight edges (but no negative-weight cycles) and it involves a nonbipartite weighted matching computation \cite{AhujaMO93,Gabow83,Gabow85,Lawler76}. In undirected graphs with negative-weight cycles, subpaths of an undirected shortest path are not necessarily shortest paths and the set of shortest paths from a single source may not define a tree (as in the case of digraphs). Observe that in a zero-weighted undirected graph, every spanning tree is an $s$-SPT.

\paragraph*{Color-constrained shortest $s$-$t$-path. }Having studied the complexity of finding a color-constrained $s$-SPT, it is natural to explore the complexity of finding a color-constrained shortest $s$-$t$-path. We call this problem as \textsc{CC-SP}. The vertex-colored variant of \textsc{CC-SP} has been recently studied~\cite{BentertKN23}. We describe linear-time reductions between these problems and show hardness and tractability results for \textsc{CC-SP} in Section~\ref{sec:sp}.

\paragraph*{Preliminaries. } A matroid is a pair ${\cal M} = (E, {\cal F})$, where $E$ is the ground set and ${\cal F}$ is a family of subsets of $E$ (called independent sets) satisfying the following three properties (1) $\emptyset \in {\cal F}$, (2) if $I \in {\cal F}$ and $J \subseteq I$, then $J \in {\cal F}$, and (3) if $I, J \in {\cal F}$ with $|J| < |I|$, then there exists an element $e \in I \setminus J$, such that $J \cup \LR{e} \in {\cal F}$. 
We use the following two types of matroids in this paper. The set of all acyclic subgraphs (forests) of an undirected graph forms a matroid called
a \emph{graphic matroid} with the edge set of the graph as the ground set. 
Given a partition $E_1, \ldots, E_q$ of the ground set $E$ and ``capacity constraints'' $\alpha_1, \ldots, \alpha_q \in \mathbb{N}$, the collection of all subsets $S$ of $E$ such that for each $i \in [q]$, $|S \cap E_i| \leq \alpha_i$, forms a matroid called a \emph{partition matroid}. In addition, if one specifies ``lower bound constraints'' $\beta_1, \ldots, \beta_q \in \mathbb{N}$, the collection of all subsets $S$ of $E$ such that $\beta_i \leq |S \cap E_i| \leq \alpha_i$ for each $i \in [q]$, is called a \emph{generalized partition matroid}.

We refer to the book by Diestel~\cite{Diestel06} for standard graph-theoretic definitions and terminology not defined here. For a digraph $G$, let $V(G)$ and $E(G)$ denote the sets of its vertices and edges/arcs, respectively.  The {\em endpoints} of the edge $(u,v)$ are $u$ and $v$ with $u$ referred to as its \emph{tail} and $v$ referred to as its \emph{head}. We also say that $(u,v)$ is an {\em incoming edge} of $v$ or an edge {\em entering $v$} and an {\em outgoing edge} of $u$ or an edge {\em leaving $u$}.  If $(u,v)$ is an edge, then $u$ is said to be an {\em in-neighbor} of $v$ and $v$ is said to be an {\em out-neighbor} of $u$. For a vertex $v$, its {\em out-neighborhood}, denoted by $N^{+}(v)$, is the set of its out-neighbors and its {\em in-neighborhood}, denoted by $N^{-}(v)$, is the set of its in-neighbors. The {\em out-degree} $d^{+}(v)$ and {\em in-degree} $d^{-}(v)$ of a vertex $v$ are the sizes of its out-neighborhood and in-neighborhood, respectively. The notation for neighborhood and degree may be subscripted with the notation for the underlying graph to avoid any ambiguity. For a digraph $D$, its {\em underlying undirected graph} $U(D)$ is the undirected graph $G$ defined as $V(G)=V(D)$ and $E(G)=\{(u,v) : u,v \in V(D), (u,v) \in E(D) \text{ or } (v,u) \in E(D)\}$.

\paragraph*{Road Map. }
In Section~\ref{sec:spt-arb}, we prove important relationships between arborescences and shortest path trees. We also show how to solve \textsc{CC-SPT} using \textsc{CC-ARB}. In Section~\ref{sec:spt-arb-fast}, we give an algorithm to solve \textsc{CC-SPT} using maximum flow computation. In Section~\ref{sec:min-cc-spt}, we give an algorithm to solve \textsc{Min CC-SPT}. 
In Section~\ref{sec:sp}, we describe hardness and tractability results for \textsc{CC-SP}.  We state some interesting open problems in Section~\ref{sec:concl}. 

\section{Shortest Path Trees and Arborescences}
\label{sec:spt-arb}
In this section, we relate shortest path trees and arborescences. We begin by mentioning some standard definitions and observations. A digraph $T$ is called an {\em out-tree} if $U(T)$ is a tree and every vertex has in-degree at most one. Out-trees have exactly one vertex whose in-degree is $0$ and that vertex is called as the {\em root} of the out-tree. A digraph $T$ is called an {\em out-forest} if $U(T)$ is a forest and every vertex has in-degree at most one. 


\subsection{NP-Completeness of \textsc{CC-ARB} and \textsc{CC-SPT}}

Now we show that \textsc{CC-ARB} is \NP-complete even when restricted to instances where $\alpha_i=\alpha'_i$ for each $i \in [q]$. 

\begin{mytheorem}
   \label{thm:col-arb-hard}
   \textsc{CC-ARB} is \NP-complete.
\end{mytheorem}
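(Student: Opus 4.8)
The plan is to first observe membership in \NP{} and then establish \NP-hardness by a reduction from the problem of deciding whether a digraph has a directed Hamiltonian path starting at a prescribed vertex. Membership is immediate: an $s$-arborescence has at most $n-1$ edges, so it is a polynomial-size certificate, and one can check in polynomial time that a given set of edges forms a spanning out-tree rooted at $s$ and uses at most $\alpha_i$ edges of color $i$ for every $i \in [q]$.

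For hardness, recall that deciding whether a digraph $D$ has a directed Hamiltonian path starting at a given vertex $r$ is \NP-complete (for instance, by adding to a \textsc{Directed Hamiltonian Path} instance a new vertex $r$ with an out-edge to every original vertex; the new vertex has in-degree $0$, so every Hamiltonian path of the modified digraph must start at $r$ and restricts to a Hamiltonian path of $D$). Given such an instance with $V(D)=\{1,\dots,n\}$, I would build the \textsc{CC-ARB} instance $(G,s,\alpha)$ by taking $G=D$, $s=r$, using $q=n$ colors, coloring each edge $(u,v)\in E(G)$ with color $u$ (so the color of an edge is the name of its tail), and setting $\alpha=(1,1,\dots,1)$. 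This is clearly polynomial-time computable.

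The core claim is that $G$ has an $\alpha$-colored $s$-arborescence if and only if $D$ has a directed Hamiltonian path starting at $r$, and the interesting direction is a short degree count. If $T$ is an $\alpha$-colored $s$-arborescence, then for each vertex $u$ the number of edges of $T$ of color $u$ equals $d^{+}_{T}(u)$, so the color constraints are exactly the requirement that $d^{+}_{T}(u)\le 1$ for every vertex $u$. Since $T$ is an arborescence, $d^{-}_{T}(s)=0$ and $d^{-}_{T}(v)=1$ for $v\ne s$. Hence in the underlying tree $U(T)$ the vertex $s$ has degree at most $1$ and every other vertex has degree at most $2$; as $U(T)$ is a spanning tree on $n\ge 2$ vertices, it must be a path having $s$ as an endpoint, and orienting it as dictated by the out-tree $T$ yields a directed Hamiltonian path of $G=D$ from $s=r$. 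Conversely, a directed Hamiltonian path $r=v_1\to v_2\to\cdots\to v_n$ is an $s$-arborescence whose edges receive the pairwise distinct colors $v_1,\dots,v_{n-1}$ and hence is $\alpha$-colored; the degenerate case $n=1$ is trivial.

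I expect the only genuinely delicate point to be this degree argument — specifically, verifying that a spanning out-tree rooted at $s$ in which every out-degree is at most $1$ is necessarily a directed Hamiltonian path starting at $s$, and handling the boundary situations ($d^{+}_{T}(s)=0$, the unique out-degree-zero vertex coinciding with $s$, and $n=1$). Everything else is routine. One could alternatively route the hardness through \textsc{3-Matroid Intersection} or a reduction from \textsc{Exact Cover by 3-Sets}, but the Hamiltonian-path reduction above is the most transparent and also foreshadows the reduction used for \textsc{CC-SPT}.
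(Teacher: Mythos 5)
Your proof is correct and takes essentially the same approach as the paper: a reduction from \textsc{Hamiltonian Path} with a prescribed start vertex, coloring each edge by its tail and setting $\alpha=(1,\dots,1)$ so that the color constraint encodes out-degree at most one. The only difference is cosmetic --- the paper additionally appends a sink $t$ receiving an edge from every vertex so that the arborescence has exactly one edge of each color (forcing out-degree exactly one), whereas you dispense with $t$ and instead argue directly that a spanning tree with maximum degree two and with $s$ of degree one is a Hamiltonian path from $s$; both arguments are sound.
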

\begin{proof}
    The membership of \textsc{CC-ARB} in \NP\ is easy to verify. To show \NP-hardness, we give a reduction from the \textsc{Hamiltonian Path} problem of determining the existence of a Hamiltonian path starting at a given vertex in a digraph. 
    Consider a digraph $D$ on $n$ vertices and let $s \in V(D)$.
    The construction of an instance $(D',\chi,s,\alpha)$ of \textsc{CC-ARB} from the instance $(D,s)$ of \textsc{Hamiltonian Path} is as follows. Let $D'$ denote the graph obtained from $D$ by adding a new vertex $t$ and adding an edge from every other vertex to $t$. Let $V(D')=\{s=v_1, \ldots, v_n, v_{n+1}=t\}$. 
    Let $\chi: E(D')\rightarrow [n]$ be the coloring function such that an edge $(v_i,v_j)$ is colored $i$. 
    Note that $E(D')\setminus E(D)$ has one edge of every color. 
    Let $\alpha$ be $([1,1],[1,1],\ldots, [1,1])$. 
    We claim that $D'$ has an $\alpha$-colored $s$-arborescence if and only if $D$ has a Hamiltonian path starting at $s$. Suppose $D$ has a Hamiltonian path $P$ starting at $s$ and ending at $x$. Then, by adding the edge $(x,t)$ to $P$, we obtain a Hamiltonian path $P'$ in $D'$. 
    Observe that $P'$ has exactly one edge of color $i$ for each $i \in[q]$. Since a Hamiltonian path is an arborescence, $P'$ is an $\alpha$-colored $s$-arborescence of $D'$. 

    Conversely, suppose $D'$ has an $\alpha$-colored $s$-arborescence $T$. Then, in $T$, the in-degree of $s$ is $0$ and the in-degree of every other vertex is $1$. Further, the out-degree of $t$ in $T$ is $0$ since $t$ has no out-neighbors in $D'$. Moreover, for each $i \in [n]$, the out-degree of $v_i$ in $T$ is $1$, since there is exactly one edge in $T$ that has color $i$. Therefore, $T$ is a Hamiltonian path of $D'$ starting from $s$ and ending at $t$. Then, the subpath of $T$ obtained by deleting $t$ is a Hamiltonian path of $D$ starting at $s$. 
\end{proof}

Since \textsc{CC-ARB} on an arbitrary digraph reduces to \textsc{CC-SPT} on the same digraph in which all edge weights are set to zero, we get the following result from Theorem \ref{thm:col-arb-hard}. 

\begin{mytheorem}
\label{thm:spt-q-col-NP-h}
\textsc{CC-SPT} is \NP-complete even on digraphs with no negative-weight cycles but having zero-weight cycles.
\end{mytheorem}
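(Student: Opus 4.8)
The plan is to prove Theorem~\ref{thm:spt-q-col-NP-h} by exhibiting a reduction, as the surrounding text suggests, from \textsc{CC-ARB} on an arbitrary digraph to \textsc{CC-SPT} on a digraph with no negative-weight cycles but possibly zero-weight cycles. Membership in \NP\ is immediate: a claimed $s$-SPT can be verified in polynomial time by computing $\delta(s,v)$ for all $v$ (say, by Bellman--Ford, which is valid since there are no negative-weight cycles), checking that the tree is spanning, rooted at $s$, and that the unique $s$-$v$-path in the tree has weight $\delta(s,v)$ for every $v$, and finally counting edges of each color against $\alpha$.

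For hardness, given an instance $(G,\chi,s,\alpha)$ of \textsc{CC-ARB}, I would form the \textsc{CC-SPT} instance on the same colored digraph $G$ with the weight function $w \equiv 0$ on every edge, keeping the same source $s$ and the same color-constraint vector $\alpha$. First I would observe that under the all-zero weighting, $G$ has no negative-weight cycles, and any directed cycle in $G$ becomes a zero-weight cycle, so the instance lies in the stated graph class. Next, the key claim is that with $w\equiv 0$ we have $\delta(s,v)=0$ for every $v$ reachable from $s$, so every edge $(u,v)$ with $v$ reachable satisfies $\delta(s,v)=\delta(s,u)+w(u,v)$; more carefully, any out-tree rooted at $s$ that spans the vertices reachable from $s$ realizes the (zero) shortest-path distances trivially, so $T$ is an $s$-SPT of $(G,w,s)$ if and only if $T$ is an $s$-arborescence of $G$ restricted to the reachable vertices. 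Since the problem statement already assumes all vertices are reachable from $s$ (and if some vertex is unreachable, both problems are trivially \no-instances), this makes $s$-SPTs of $(G,w\equiv 0,s)$ coincide exactly with $s$-arborescences of $G$. Because colors and the constraint vector are unchanged, $G$ has an $\alpha$-colored $s$-SPT under $w\equiv 0$ if and only if $G$ has an $\alpha$-colored $s$-arborescence, i.e. if and only if $(G,\chi,s,\alpha)$ is a \yes-instance of \textsc{CC-ARB}. Combined with Theorem~\ref{thm:col-arb-hard}, this yields \NP-hardness of \textsc{CC-SPT} on this restricted class.

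The main thing to be careful about, rather than a genuine obstacle, is handling the definitional edge cases cleanly: ensuring the claim about $s$-SPTs versus $s$-arborescences holds even when $G$ contains directed cycles (so that $G_s$, in the language of Observation~\ref{obs:SPS-DAG}/Definition~\ref{defn:shortestpath_subgraph}, may not be a DAG and indeed equals all of $G$ here), and making sure the reduction is stated for digraphs where \emph{every} vertex is reachable from $s$ so that ``$s$-SPT'' and ``$s$-arborescence'' refer to spanning subgraphs in both problems. If one prefers to avoid the reachability assumption, one can instead note that \textsc{Hamiltonian Path} reduces to \textsc{CC-ARB} via an instance in which all vertices are reachable from $s$ (the reduction in Theorem~\ref{thm:col-arb-hard} has this property after adding the sink $t$), and then apply the zero-weighting argument to that instance. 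Either way the reduction is linear-time and the weights introduced are all zero, so no negative-weight cycle can appear while zero-weight cycles do, giving exactly the claimed hardness.
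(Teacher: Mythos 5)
Your reduction is exactly the one the paper uses: set all edge weights of the \textsc{CC-ARB} instance to zero, so that $s$-SPTs coincide with $s$-arborescences while the graph has no negative-weight cycles (only zero-weight ones), and invoke Theorem~\ref{thm:col-arb-hard}. The proposal is correct and merely spells out the membership check and reachability edge cases that the paper leaves implicit.
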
  

We note that the reduction given in Theorem \ref{thm:col-arb-hard} does not give any hardness result if we restrict to DAGs (\textsc{Hamiltonian Path} on DAGs is polynomial-time solvable). 

\subsection{\textsc{CC-SPT} and \textsc{CC-ARB}}

Consider a $q$-colored $w$-weighted digraph $G$. 
Let $s \in V(G)$ be such that all vertices of $G$ are reachable from $s$. Let $\delta(u,v)$ denote the distance from $u$ to $v$ which is defined as the weight of a minimum weight (shortest) $u$-$v$-path in $G$. For a vertex $v$ unreachable from $u$, $\delta(u,v)=\infty$. The \textit{$s$-shortest paths graph} $G_s$ of $G$, which we define next, will play a crucial role in this article. 

\begin{definition}\label{defn:shortestpath_subgraph}
For a $w$-weighted digraph $G$ and a vertex $s \in V(G)$, the $s$-shortest paths graph ($s$-SPG) $G_s$ of $G$ is the subgraph of $G$ with $V(G_s)=V(G)$ and $E(G_s)= \{ (u,v) \in E(G) : \delta(s,v) = \delta(s,u) + w(u,v) \}$. If $G$ is edge-colored with a coloring function $\chi$, then $G_s$ is also edge-colored with the coloring function $\chi_{|_{E(G_s)}}$.
\end{definition}

Observe that $G_s$ may also have parallel edges. We now list some basic observations about $G_s$. 

\begin{myobservation}
\label{obs:SPS-DAG}
If $G$ is a $w$-weighted digraph with only positive cycles, then $G_s$ is acyclic.    
\end{myobservation}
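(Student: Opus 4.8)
The plan is to argue by contradiction. Suppose $G_s$ is not acyclic, so it contains a directed cycle $C = v_1 v_2 \cdots v_k v_1$, where we read indices modulo $k$ (so $v_{k+1}=v_1$) and each $(v_i, v_{i+1})$ is an edge of $G_s$. By the definition of $G_s$ (Definition~\ref{defn:shortestpath_subgraph}), this means that $\delta(s, v_{i+1}) = \delta(s, v_i) + w(v_i, v_{i+1})$ holds for every $i \in \{1,\dots,k\}$.

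Before combining these equalities, I would first observe that each quantity $\delta(s, v_i)$ is a finite real number: every vertex of $G$ is reachable from $s$ by our standing assumption, and since $G$ has only positive-weight cycles, a minimum-weight $s$-$v_i$-walk is attained by a (simple) path, so $\delta(s,v_i) < \infty$. Hence the $\delta(s,v_i)$ may be added and subtracted freely. Summing the $k$ equalities over all $i$, the $\delta$-terms on the two sides telescope away around the cycle — each $\delta(s,v_j)$ appears exactly once on the left-hand side (as $\delta(s,v_{i+1})$ for $i+1\equiv j$) and exactly once on the right-hand side (as $\delta(s,v_i)$ for $i\equiv j$) — leaving $0 = \sum_{i=1}^{k} w(v_i,v_{i+1}) = w(C)$.

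But $C$ is a directed cycle of $G$, and by hypothesis every cycle of $G$ has positive weight, so $w(C) > 0$, contradicting $w(C)=0$. Therefore $G_s$ contains no directed cycle, i.e.\ $G_s$ is acyclic.

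I do not anticipate a genuine obstacle in this argument; the only point that needs a moment of care is the justification that every $\delta(s, v_i)$ is finite, so that the telescoping cancellation is legitimate. This is precisely where the reachability assumption on $s$ together with the absence of non-positive cycles (which guarantees shortest-path distances are well defined and attained) is used.
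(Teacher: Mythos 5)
Your proof is correct and follows essentially the same route as the paper: sum the defining equalities $\delta(s,v_{i+1})=\delta(s,v_i)+w(v_i,v_{i+1})$ around the putative cycle, let the $\delta$-terms telescope to give $w(C)=0$, and contradict the positivity of all cycle weights. The extra remark on the finiteness of the $\delta$-values is a harmless refinement the paper leaves implicit.
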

\begin{proof}
Suppose $G_s$ contains a cycle $C=(v_1,v_2,\ldots,v_k,v_1)$. By the definition of $G_s$, we have  $\delta(s,v_1) - \delta(s,v_k)= w(v_k,v_1)$ and for $2 \le i \le k$, $\delta(s,v_i) - \delta(s,v_{i-1}) = w(v_{i-1}, v_i)$. Combining these equalities, we get $0 = w(v_k,v_1) + \sum_{i=2}^{k}{w(v_{i-1}, v_i)}=w(C)$, a contradiction to the assumption that $G$ has only positive-weight cycles.
\end{proof}

\begin{myobservation}
\label{obs:uv-SPS}
In a $w$-weighted digraph $G$ with only positive cycles, an edge $(u,v) \in E(G)$ is an edge in $G_s$ if and only if $(u,v)$ is the last edge of some shortest $s$-$v$ path in $G$. Moreover, every edge of $G$ that is in some shortest path from $s$ is in $G_s$.
\end{myobservation}
\begin{proof}
It is clear that if $(u,v) \in E(G)$ is the last edge of some shortest $s$-$v$ path in $G$, then $\delta(s,v) = \delta(s,u) + w(u,v)$, and hence $(u,v) \in E(G_s)$. Conversely, let $(u,v) \in E(G_s)$. Then, by definition of $G_s$, we have $\delta(s,v) = \delta(s,u) + w(u,v)$. Let $P$ be a shortest $s$-$u$ path in $G$ and let $P'$ be the walk obtained by extending $P$ with $(u,v)$. Then, $w(P')=w(P)+w(u,v) = \delta(s,u)+w(u,v)=\delta(s,v)$. If $P'$ is not a path, then it contains a subpath $P''$ from $s$ to $v$. Since every cycle in $G$ has a positive weight, we get $w(P'') < w(P') = \delta(s,v)$, a contradiction. Hence $(u,v)$ is the last edge of a shortest $s$-$v$ path, namely $P'$. 

Moreover, since a subpath of a shortest path is a shortest path (optimal substructure property) in a graph with no negative-weight cycles, every edge of $G$ that is in some shortest path from $s$ is in $G_s$.
\end{proof}

\begin{myobservation}
\label{obs:spt-out-tree}
In a $w$-weighted digraph $G$ with only positive cycles, $T$ is an $s$-SPT of $G$ if and only if $T$ is an $s$-arborescence of $G_s$. 
\end{myobservation}
\begin{proof}
Consider an $s$-arborescence $T$ of $G_s$. It is enough to show that for each vertex $v\in V(G_s)$, the $s$-$v$ path $P_v$ in $T$ is a shortest $s$-$v$ path in $G$. Assume on the contrary that there is some vertex $x$ such that $P_x$ is not a shortest $s$-$x$ path in $G$. Among all such vertices, choose a vertex $x$ with minimum number of edges in the path $P_x$. Clearly $x \neq s$. Let $z$ be the predecessor of $x$ in $P_x$. Then $w(P_x)=w(P_z)+w(z,x)=\delta(s,z)+w(z,x)$. As $(z,x) \in E(G_s)$, we have $\delta(s,x)=\delta(s,z)+w(z,x)$ implying that $P_x$ is a shortest $s$-$x$ path in $G$, a contradiction.   

Conversely suppose $T$ is an $s$-SPT of $G$. Clearly $T$ is an $s$-arborescence of $G$. By optimal substructure property of shortest paths, every edge in $T$ is in some shortest $s$-$v$ path in $G$. Hence, by Observation \ref{obs:uv-SPS}, $E(T) \subseteq E(G_s)$. Therefore, $T$ is also an $s$-arborescence of $G_s$.
\end{proof}

From Observation~\ref{obs:spt-out-tree} and Observation~\ref{obs:SPS-DAG}, the following connections between \textsc{CC-SPT} and \textsc{CC-ARB} are immediate corollaries. 
\begin{mycorollary}
\label{cor:spt-arb-connection}
In a $w$-weighted $q$-colored digraph $G$ with only positive cycles, $T$ is a (minimum weight) $\alpha$-colored $s$-SPT of $G$ if and only if $T$ is a (minimum weight) $\alpha$-colored $s$-arborescence of $G_s$. 
\end{mycorollary}


\subsection{\textsc{CC-ARB} in DAGs and \textsc{2-Matroid Intersection}} 

We show that \textsc{CC-ARB} on a DAG $G$ and a color-constraint vector $\alpha=([0,\alpha'_1],[0,\alpha'_2],\ldots,[0,\alpha'_q])$ is polynomial-time solvable by a reduction to \textsc{2-Matroid Intersection}. Let $s \in V(G)$. If $s$ is not a source in $G$, then there is no $s$-arborescence in $G$. Hence we assume that $s$ is a source vertex in $G$. We show that the collection of all out-forests of $G$ forms a matroid with $E(G)$ as the ground set. Let $\mathcal{I}_1$ denote the set of all subsets $S \subseteq E(G)$ such that the subgraph $H=(V(G),S)$ has maximum in-degree one. Then $M_1 = (E(G), \mathcal{I}_1)$ forms a partition matroid where the edges are partitioned based on their head. As any subgraph of a DAG in which the in-degree of every vertex is at most one is an out-forest, it follows that $\mathcal{I}_1$ is the collection of edge-sets of all out-forests of $G$.

Let $E_i$ denote the edges of $G$ that are colored using $i$. Now, define a second partition matroid $M_2=(E(G),\mathcal{I}_2)$ where $\mathcal{I}_2=\{S\subseteq E(G) : |S \cap E_i| \leq \alpha'_i \text{ for every } i\in [q]\}$. Then, an $\alpha$-colored $s$-arborescence of $G$ is a common independent set of $\mathcal{I}_1$ and $\mathcal{I}_2$ of size $|V(G)|-1$ and vice-versa. Hence, it follows that such an $s$-arborescence can be obtained by solving the corresponding instance of \textsc{2-Matroid Intersection}. 

Next, consider \textsc{CC-ARB} on a DAG $G$ and color-constraint vector $\alpha$ of the form $\alpha=([\alpha_1,\infty],[\alpha_2,\infty],\ldots,[\alpha_q,\infty])$. Let $G'$ denote the graph obtained from $G$ by duplicating every edge and coloring the new copies with a new color $q+1$. Then $G$ has an $s$-arborescence with at least $\alpha_i$ edges of color $i$ for each $i \in [q]$ if and only if $G'$ has an $s$-arborescence with at most $\alpha_i$ edges of color $i$ for each $i \in [q+1]$ where $\alpha_{q+1}=|V(G)|-1- \sum_{i=1}^{q} \alpha_i$. Therefore, a desired solution can be obtained by solving the corresponding instance of \textsc{2-Matroid Intersection}. 


Finally, consider a general instance of \textsc{CC-ARB} on a DAG $G$ and color-constraint vector $\alpha=([\alpha_1,\alpha'_1],\ldots,[\alpha_q,\alpha'_q])$. An $\alpha$-colored $s$-arborescence of $G$ may be seen as a largest common independent set of the matroid $M_1$ and the generalized partition matroid $M_3=(E(G),\mathcal{I}_3)$ where $\mathcal{I}_3=\{S \subseteq E(G) : \alpha_i \leq |S| \leq \alpha'_i, i \in [q]\}$, leading to a polynomial-time algorithm for finding the same~\cite{BrezovecCG88}. 

\section{Color-Constrained Shortest Path Trees} 
\label{sec:spt-arb-fast}
Consider a $q$-colored $w$-weighted digraph $G$ with only positive-weight cycles, a vertex $s \in V(G)$ and a color-constraint vector $\alpha=([\alpha_1,\alpha'_1],\ldots,[\alpha_q,\alpha'_q])\in (\mathbb{N} \times \mathbb{N})^q$. Let $n$ and $m$ denote the number of vertices and edges respectively in $G$. Observe that if $\sum_i \alpha_i >n-1$ or $\sum_i \alpha'_i <n-1$, then the instance has no solution and hence we may assume that $\sum_i \alpha_i \leq n-1$ and $\sum_i \alpha'_i \geq n-1$. Further, we may also assume that $\alpha'_i \leq n-1$ for each $i \in [q]$. 
Our algorithm for \textsc{CC-SPT} involves the following three steps.
\begin{itemize}
    \item Step~1: Use a single source shortest path algorithm on $(G,w,s)$ to compute $\delta(s,v)$ for each $v \in V(G)$.
    \item Step~2: Construct $G_s$ with $V(G_s)=V(G)$, $E(G_s)= \{ (u,v) \in E(G) : \delta(s,v) = \delta(s,u) + w(u,v) \}$. 
    \item Step~3: Use Algorithm~\textsf{CC-ARB-Flow} on $(G_s, \alpha)$, to find an $\alpha$-colored $s$-arborescence $T$ in $G_s$, if one exists. If a solution $T$ is found, return $T$. Otherwise, report that no solution exists.
\end{itemize}
By Observation~\ref{obs:SPS-DAG}, the graph $G_s$, computed in Step~2 is a DAG.  Using $G_s$ as the input DAG, Algorithm~\textsf{CC-ARB-Flow} computes an $\alpha$-colored $s$-arborescence $T$ of $G_s$, if one exists. From Corollary~\ref{cor:spt-arb-connection}, $T$ is an $\alpha$-colored $s$-SPT in $G$. Crucial to the correctness in the computation of an $\alpha$-colored $s$-arborescence of $G_s$ in Step~3 is the following observation.

\begin{myobservation}
\label{obs:spt-G}
A spanning subgraph $H$ of $G_s$ is an $s$-arborescence if and only if exactly one edge of $H$ enters each vertex $v \in V(H) \setminus \{s\}$. 
\end{myobservation}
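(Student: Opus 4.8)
The plan is to prove both directions directly from the definitions of out-tree/arborescence and of $G_s$, using the fact (Observation~\ref{obs:SPS-DAG}) that $G_s$ is acyclic whenever $G$ has only positive-weight cycles.

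For the forward direction, suppose $H$ is an $s$-arborescence of $G_s$. By definition an arborescence is a spanning out-tree rooted at $s$, so in $H$ the vertex $s$ has in-degree $0$ and, since $U(H)$ is a spanning tree on $n$ vertices with $n-1$ edges all ``pointing away'' from $s$, every other vertex has in-degree exactly one. This is essentially immediate and needs no real work.

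The substantive direction is the converse: assume $H$ is a spanning subgraph of $G_s$ in which exactly one edge enters each $v \in V(H)\setminus\{s\}$ and no edge enters $s$; we must show $H$ is an $s$-arborescence. First I would count edges: $|E(H)| = n-1$ where $n = |V(G)|$. Next I would argue $U(H)$ is acyclic. Here is where Observation~\ref{obs:SPS-DAG} (and hence the positive-cycle hypothesis, implicit since $H \subseteq G_s$ and $G_s$ is a DAG) does the work: any cycle in $U(H)$ would, because every vertex of $H$ has in-degree at most one, have to be a directed cycle in $H$, hence a directed cycle in $G_s$, contradicting acyclicity of $G_s$. (A directed graph with all in-degrees $\le 1$ whose underlying undirected graph contains a cycle must itself contain a directed cycle — follow the unique in-edges backward around the undirected cycle.) So $U(H)$ is a forest with $n$ vertices and $n-1$ edges, hence a spanning tree, so $H$ is a spanning out-forest that is connected, i.e.\ a spanning out-tree. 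Finally, since $s$ is the unique vertex of in-degree $0$, $s$ is its root, so $H$ is an $s$-arborescence of $G_s$.

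The main obstacle is the single step ``all in-degrees $\le 1$ plus an undirected cycle forces a directed cycle'': this is the only place the structure of $G_s$ (acyclicity) is genuinely invoked, and it is what rules out the bad case where the in-edge condition is met but $U(H)$ has a cycle traversed with inconsistent orientations. Everything else is bookkeeping with edge counts and the definitions of out-tree and arborescence.
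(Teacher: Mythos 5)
Your proposal is correct and follows essentially the same route as the paper: forward direction by definition, and for the converse, combining the in-degree condition with the acyclicity of $G_s$ (Observation~\ref{obs:SPS-DAG}) to rule out cycles in $U(H)$, then counting edges to conclude $U(H)$ is a spanning tree. The paper phrases the key step in contrapositive form (a cycle in $U(H)$ that is not directed would force some vertex to have in-degree two), while you argue that the in-degree bound forces any such cycle to be directed and hence impossible in a DAG; these are the same argument.
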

\begin{proof}
The forward direction of the claim follows by the definition of $s$-arborescence. Now, for the converse, consider a spanning subgraph $H$ of $G_s$ such that for each $v \in V(H) \setminus \{s\}$, $d^-_H(v)=1$. Since any DAG has a vertex with in-degree zero, $d^-_H(s)=0$. Since $G_s$ is acyclic (Observation \ref{obs:SPS-DAG}), so is $H$.  We claim that $U(H)$ is also acyclic. If $U(H)$ has a cycle $C$, then the corresponding directed subgraph of $H$ (which is acyclic) has at least one vertex with in-degree at least two, which is a contradiction. Since $U(H)$ is acyclic and has exactly $|V(H)|-1$ edges, it follows that $H$ is an $s$-arborescence of $G_s$.
\end{proof}

Now, we describe Algorithm~\textsf{CC-ARB-Flow} that computes an $\alpha$-colored $s$-arborescence (if one exists) in an arbitrary $q$-colored DAG $G_s$.  Algorithm \textsf{CC-ARB-Flow} first computes the number $\pi(v,i)$ of edges colored $i$ entering $v$ for each $v\in V(G_s)\setminus \{s\}$ and $i \in [q]$. Then, it constructs an auxiliary flow network $H$ using this information and computes an integral maximum feasible flow in $H$. The value of this flow determines the existence of an $\alpha$-colored $s$-arborescence of $G_s$ and when one exists it is easily constructed from this flow. Crucial to the correctness of the algorithm is Lemma~\ref{lem:spt-maxflow}.



\begin{algorithm}[ht]
	\caption{\textsf{CC-ARB-Flow}$(G_s,\alpha)$}
	\label{alg:flow-arb}
	\begin{algorithmic}[1]
		\Require A $q$-colored DAG $G_s$ where $s$ is designated as root and $\alpha \in (\mathbb{N} \times \mathbb{N})^q$
		\Ensure An $\alpha$-colored $s$-arborescence of $G_s$ (if one exists)
        \State Let $E_i$ denote the edges of $G_s$ that are colored using $i$. For each $v\in V(G_s)\setminus \{s\}$ and $i \in [q]$, compute the number $\pi(v,i)$ of edges in $E_i$ entering $v$ using a single linear scan through $E(G_s)$  
        \blueLcomment{Construction of auxiliary flow network $H$}
        \State Initialize a flow network $H$ with $V(H)=\{s^*,t^*\} \cup C \cup U$ where $C=[q]$, $U=V(G_s) \setminus \{s\}$, source $s^*$, sink $t^*$ and edges $E(H)=\emptyset$ along with capacity function $c:E(H) \rightarrow \mathbb{Z}^{\geq 0}$ and demand function $d:E(H) \rightarrow \mathbb{Z}^{\geq 0}$
        \For {each $i \in C$}
            \State add $(s^*,i)$ to $E(H)$ and set $c(s^*,i)=\alpha'_i$, $d(s^*,i)=\alpha_i$ 
        \EndFor
        \For {each $i \in C$ and $v\in U$ with $\pi(v,i)>0$}
            \State add $(i,v)$ to $E(H)$ and set $c(i,v)=1$, $d(i,v)=0$
        \EndFor
        \For {each $v\in U$} 
            \State add $(v,t^*)$ to $E(H)$ and set $c(v,t^*)=1$, $d(v,t^*)=0$ 
        \EndFor
        \blueLcomment{Construction of solution $T$}
        \State Compute an integral feasible flow $f$ of value $n-1$ (if one exists) in $H$. \label{line:flow}
        \If{no such flow $f$ exists}
            \State \Return $\emptyset$ to convey that $G_s$ has no $\alpha$-colored $s$-arborescence
        \Else
            \State Initialize graph $T$ with $V(T)=V(G_s)$ and $E(T)=\emptyset$
            \For{each $v \in U$ and $i\in C$ such that $f(i,v)>0$}
                \State add an arbitrary edge $(x,v) \in E_i$ to $T$
            \EndFor
            \State \Return $T$
        \EndIf
	\end{algorithmic}
\end{algorithm}

\begin{mylemma}
\label{lem:spt-maxflow}
$H$ has a feasible flow of value $n-1$ if and only if $G_s$ has an $\alpha$-colored $s$-arborescence.
\end{mylemma}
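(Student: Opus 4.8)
The plan is to establish the biconditional by exhibiting explicit, structure-preserving correspondences between integral flows of value $n-1$ in $H$ and $\alpha$-colored $s$-arborescences of $G_s$, using Observation~\ref{obs:spt-G} to identify the latter with spanning subgraphs in which exactly one edge enters each vertex of $U = V(G_s)\setminus\{s\}$. Note first that every edge out of $t^*$'s side is at capacity $1$ into $t^*$, and since $|U| = n-1$, a flow of value $n-1$ must saturate every edge $(v,t^*)$; by flow conservation at $v$, this forces exactly one unit to enter $v$, which (again by capacity $1$ on the edges $(i,v)$) means exactly one color $i$ with $\pi(v,i)>0$ sends a unit to $v$. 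Conversely the capacities $c(s^*,i)=\alpha_i$ bound the total flow through color $i$ by $\alpha_i$.

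For the forward direction, suppose $f$ is an integral flow in $H$ of value $n-1$ (which exists iff the max flow has value $\ge n-1$; note the value cannot exceed $n-1$ because the cut separating $t^*$ has capacity $|U| = n-1$). By the above, for each $v\in U$ there is a unique color $i_v$ with $f(i_v,v)=1$. Build $T$ exactly as in Algorithm~\textsf{CC-ARB-Flow}: for each $v\in U$, pick an arbitrary edge $(x,v)\in E(G_s)\cap E_{i_v}$ — such an edge exists because $\pi(v,i_v)>0$ is precisely the condition under which $(i_v,v)\in E(H)$. Then $T$ is a spanning subgraph of $G_s$ with exactly one edge entering each $v\in U$, so by Observation~\ref{obs:spt-G}, $T$ is an $s$-arborescence of $G_s$. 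For the color count: the number of edges of $T$ colored $i$ equals $|\{v\in U : i_v = i\}| = \sum_{v\in U} f(i,v)$, which by flow conservation at node $i$ equals $f(s^*,i)\le c(s^*,i)=\alpha_i$. Hence $T$ is $\alpha$-colored.

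For the reverse direction, suppose $T$ is an $\alpha$-colored $s$-arborescence of $G_s$. For each $v\in U$, let $c(v)$ denote the color of the unique edge of $T$ entering $v$; note $\pi(v,c(v))>0$, so $(c(v),v)\in E(H)$. Define $f(s^*,i) = |\{v\in U : c(v)=i\}|$, $f(i,v) = 1$ if $c(v)=i$ and $0$ otherwise, and $f(v,t^*)=1$ for all $v\in U$. One checks capacity constraints ($f(s^*,i)\le\alpha_i$ since $T$ is $\alpha$-colored; the other capacities are $1$ and the flows are $0$ or $1$) and conservation at each $i\in C$ and each $v\in U$. The value of $f$ is $\sum_i f(s^*,i) = |U| = n-1$.

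The only mildly delicate point — and the place to be careful — is handling the \emph{at most} $\alpha_i$ constraint correctly against the \emph{exactly one edge per vertex} requirement: the flow model lets color $i$ be used anywhere between $0$ and $\alpha_i$ times, and it is the saturation of the sink-side edges (forced by value $= n-1 = |U|$) that converts this into a valid arborescence. I would state the cut bound $|f|\le n-1$ explicitly up front so that ``$H$ has a flow of value $n-1$'' is equivalently ``the maximum flow value is $\ge n-1$'', matching the test $|f|<n-1$ used in the algorithm. Everything else is a routine verification of the matroid/flow dictionary already set up by Theorem~\ref{thm:cc-arb-dag} and Observation~\ref{obs:spt-G}.
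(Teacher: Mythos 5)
Your proof is correct and follows essentially the same route as the paper's: both directions use the identical flow construction ($f(s^*,i)$ equal to the number of color-$i$ edges of $T$, unit flows on $(i,v)$ and $(v,t^*)$), the same cut bound at $t^*$ forcing saturation of every $(v,t^*)$ edge, integrality of the maximum flow, and Observation~\ref{obs:spt-G} to certify the constructed subgraph is an $s$-arborescence. The only difference is cosmetic (you present the flow-to-arborescence direction first and flag the ``at most $\alpha_i$'' subtlety explicitly), so no changes are needed.
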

\begin{proof}
Suppose $G_s$ has an $\alpha$-colored $s$-arborescence $T$. For each $i \in [q]$, let $\beta_i$ denote the number of edges of color $i$ in $T$. We define a function $f:V(H) \times V(H) \rightarrow \mathbb{Z}^{\geq 0}$ as follows. 
\begin{itemize}
    \item For each $i \in C$, $f(s^*,i)=\beta_i$. 
    \item For each $i \in C$ and $v \in U$, $f(i,v)=1$ if the edge entering $v$ in $T$ has color $i$. 
    \item For each $v\in U$, $f(v,t^*)=1$.
    \item $f$ is $0$ everywhere else.
\end{itemize}

It is easy to verify that $f$ respects non-negativity, capacity and demand constraints of $H$. Consider a vertex $v \in U$. As $v$ has exactly one edge of $T$ entering it, we have $\sum_x f(x,v)=1$. In $H$, $v$ has exactly one edge $(v, t^*)$ leaving it and hence $\sum_x f(v,x)=f(v,t^*)=1$. Hence, $f$ satisfies flow conservation constraints at every vertex $v \in U$. Next, consider a vertex $i \in C$. Since $T$ has $\beta_i$ edges of color $i$, we have $\sum_v f(i,v) = \beta_i$. As $f(s^*,i) = \beta_i$, $f$ satisfies flow conservation constraints at every vertex $i \in C$. Finally, as $\sum_i f(s^*,i)= \sum_i \beta_i =n-1$, $f$ has value $n-1$. 

Conversely, suppose $H$ has a feasible flow $f$ of value $n-1$. Then, $f$ is a maximum flow of $H$, as the sum of capacities of edges entering $t^*$ is $n-1$. Since all the capacities are integral, we can assume $f$ to be integral. Let $v$ be any vertex in $U$.  Since the total flow entering $t^*$ is $n-1$, we have $f(v, t^*) = c(v, t^*) = 1$. To conserve the flow at $v$, we must have $f(i,v)=1$ for exactly one $i \in C$. Let $i$ be any vertex in $C$. To conserve the flow at $i$, we have $\sum_v f(i,v) = f(s^*,i) \leq c(s^*, i) = \alpha'_i$. Similarly, $\sum_v f(i,v) = f(s^*,i) \geq d(s^*, i) = \alpha_i$. Now, construct a graph $T$ with $V(T)=V(G_s)$ as follows: for each $v \in U$ and $i\in C$ such that $f(i,v) > 0$, add an arbitrary $i$-colored edge in $E(G_s)$ entering $v$ to $T$. Then, by properties of $f$ established above, for each $v \in U$, there is exactly one edge in $T$ entering $v$ and for each color $i$, there are at most $\alpha'_i$ and at least $\alpha_i$ edges of color $i$ in $T$. 
By Observation \ref{obs:spt-G}, $T$ is an $\alpha$-colored $s$-arborescence.
\end{proof}

Algorithm \textsf{CC-ARB-Flow} (in Line~\ref{line:flow}) computes an integral feasible flow of value $n-1$ (if one exists) in $H$. This can be done by a linear-time reduction to the problem of finding an integral maximum flow in another network as given by the following lemma. Note that reductions with similar constructions are known in the literature (see for example~\cite{Erickson15}).





\begin{restatable}[]{mylemma}{feasibleflow}
\label{lem:feasible-flow-max-flow}
    There is a linear-time algorithm that given a flow network $(H,s,t,c,d)$ with capacity function $c$, demand function $d$ and an integer $k$ produces another flow network $(H_k,s',t',c')$ with capacity function $c'$ (and no demand function) such that $H$ has a feasible flow of value $k$ if and only if $H_k$ has a flow of value $k+\sum_{(u,v) \in E(H)} d(u,v)$. Further, one can transform a flow $f'$ in $H_k$ of value $k+\sum_{(u,v) \in E(H)} d(u,v)$ into a feasible flow $f$ in $H$ of value $k$ in linear time. 
\end{restatable}
\begin{proof}
     \textbf{(sketch)} $H_k$ is constructed from $H$ by adding a new source vertex $s'$ and a new sink vertex $t'$ such that there is an edge from $s'$ to every vertex in $V(H)$ and an edge from every vertex in $V(H)$ to $t'$. 
     The capacity function $c' : V(H_k) \times V(H_k) \rightarrow \mathbb{R}^{\geq 0}$ is defined as follows.
$$
c'(u,v)= 
\begin{cases}
\underset{x\in N^{-}(v)}\sum d(x,v) & \text{if } u=s' \text{ and } v\in V(H)\setminus \{s\} \\
\underset{y\in N^{+}(v)}\sum d(u,y) & \text{if } v=t' \text{ and } u\in V(H)\setminus \{t\} \\
c(u,v) - d(u,v) & \text{if }(u, v) \in E(H) \\
k & \text{if }(u,v)= (s', s) \text{ or } (u, v) = (t, t')
\end{cases}
$$
We claim that $H$ has a feasible flow of value $k$ if and only if $H_k$ has a flow of value $k+\sum_{e \in E(H)} d(e)$. Moreover, if $f'$ is a flow in $H_k$ of value $k+\sum_{e \in E(H)} d(e)$, then the function $f: E(H)\rightarrow \mathbb{R}^{\geq 0}$ defined as $f(u,v)= f'(u,v)+ d(u,v) \text{ for all }(u,v)\in E(H)$ is a flow of value $k$ in $H$. The complete proof is deferred to the appendix.  
\end{proof}

 Now, we have the following result. 

\begin{mytheorem}
\label{thm:arb-q-col-flow}
Algorithm~\textsf{CC-ARB-Flow} solves \textsc{CC-ARB} on $q$-colored DAGs with $n$ vertices and $m$ edges in $\mathcal{O}(m + T_{MF}(\widehat{n},\widehat{m}))$.
\end{mytheorem}  
\begin{proof}
Consider an instance $(G_s,\alpha)$ of \textsc{CC-ARB}. The algorithm computes $\pi(v,i)$ for each $v \in V(G_s) \setminus \{s\}$ and $i \in [q]$ in $\mathcal{O}(n+m)$ time which in turn is used in constructing the network $H$ that has $\widehat{n}=n+q+1$ vertices and $\widehat{m}=\mathcal{O}(\min(m,nq)+q+n)$ edges in $\mathcal{O}(\widehat{n}+\widehat{m})$ time. Then, we construct the network $H_{n-1}$ using Lemma~\ref{lem:feasible-flow-max-flow} in $\mathcal{O}(\widehat{n}+\widehat{m})$ time. We compute a maximum flow $f'$ in $H_{n-1}$ in $T_{MF}(\widehat{n},\widehat{m})$ time. If the value of $f'$ is less than $n-1+\sum_{i=1}^n \alpha_i$, then $G_s$ has no $\alpha$-colored $s$-arborescence (by Lemma \ref{lem:spt-maxflow} and Lemma~\ref{lem:feasible-flow-max-flow}) and we declare the same. Subsequently, we assume that the value of $f'$ is $n-1+\sum_{i=1}^n \alpha_i$. If $f'$ is not integral, since the capacities and demands in $H$ are integral, we obtain an integral flow $\widetilde{f}$ of the same value (and same cost) using the rounding algorithm by Kang and Payor~\cite{KangP15} in $\mathcal{O}(\widehat{m} \log \widehat{m})$ time. Then, using Lemma~\ref{lem:feasible-flow-max-flow}, we obtain an integral feasible flow $f$ of value $n-1$ in $H$ from $\widetilde{f}$ in $\mathcal{O}(\widehat{n}+\widehat{m})$ time. Finally, by Lemma \ref{lem:spt-maxflow}, we construct an $\alpha$-colored $s$-arborescence of $G_s$ in $\mathcal{O}(\widehat{m})$ time. Therefore, the overall running time of the algorithm is $\mathcal{O}(m + T_{MF}(\widehat{n},\widehat{m}))$. 
\end{proof}

The correctness of the algorithm for \textsc{CC-SPT}, described in the beginning of this section, follows by Observation~\ref{obs:SPS-DAG} and Corollary~\ref{cor:spt-arb-connection}. The running time of Step~1 of the algorithm is $\mathcal{O}(T_{SSP}(n,m))$. Step~2 of the algorithm takes $\mathcal{O}(n+m)$ time. By Theorem~\ref{thm:arb-q-col-flow}, Step~3 takes $\mathcal{O}(m+T_{MF}(\widehat{n},\widehat{m}))$ time. Thus, we have the following result.

\begin{mytheorem}
\label{thm:spt-q-col-flow}
\textsc{CC-SPT} on $q$-colored weighted digraphs $G$ with $n$ vertices, $m$ edges and only positive cycles can be solved in $\mathcal{O}(T_{SSP}(n,m) + T_{MF}(\widehat{n},\widehat{m}))$ time.
\end{mytheorem}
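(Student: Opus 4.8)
The plan is simply to assemble the three-step algorithm sketched above and verify its correctness and running time from components already in hand. First I would run a single-source shortest path algorithm \textsf{SSP} from $s$ on $(G,w)$ to obtain $\delta(s,v)$ for every $v$; since $G$ has only positive-weight cycles (in particular, no negative-weight cycle) this is well-defined and computable in $T_{SSP}(n,m)$ time using BFS, Dijkstra, or Bellman--Ford depending on the weights. Next I build $G_s$ by a single linear scan over $E(G)$, keeping exactly the edges $(u,v)$ with $\delta(s,v)=\delta(s,u)+w(u,v)$, which costs $\mathcal{O}(n+m)$ time. Finally I invoke Algorithm~\textsf{CC-ARB-Flow} on $(G_s,\alpha)$ and return its output.

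For correctness: by Observation~\ref{obs:SPS-DAG}, the positive-cycle hypothesis forces $G_s$ to be a DAG, so \textsf{CC-ARB-Flow} is applied to a legitimate input; by Lemma~\ref{lem:arb-q-col-flow} it returns an $\alpha$-colored $s$-arborescence of $G_s$ whenever one exists and reports failure otherwise; and by Corollary~\ref{cor:spt-arb-connection} an $\alpha$-colored $s$-arborescence of $G_s$ is precisely an $\alpha$-colored $s$-SPT of $G$. Chaining these three facts shows the algorithm answers \textsc{CC-SPT} correctly and in fact outputs a witness tree when one exists.

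For the running time I would just add up the steps: Step~1 costs $T_{SSP}(n,m)$, Step~2 costs $\mathcal{O}(n+m)$, and Step~3 costs $\mathcal{O}(m+nq\sqrt{n})$ by Lemma~\ref{lem:arb-q-col-flow}. Since any shortest-path algorithm must at least read the input, $T_{SSP}(n,m)=\Omega(n+m)$, so the $\mathcal{O}(n+m)$ and the additive $\mathcal{O}(m)$ terms are absorbed, leaving the claimed $\mathcal{O}(T_{SSP}(n,m)+nq\sqrt{n})$ bound.

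There is essentially no new obstacle here: all the real work lies in Lemma~\ref{lem:spt-maxflow} (the flow/arborescence correspondence) and Lemma~\ref{lem: max-flow} (the amortized analysis of Dinitz's algorithm on the bipartite network $H$, yielding the $\mathcal{O}(\sqrt{n})$ bound on the number of phases and $\mathcal{O}(nq)$ work per phase). The only points needing a moment's care when writing the proof are (i) stating explicitly that the positive-cycle assumption is exactly what makes $G_s$ acyclic, so \textsf{CC-ARB-Flow}'s DAG precondition is met; (ii) noting $\delta(s,\cdot)$ is computable under the stated weight regimes; and (iii) the absorption of the lower-order $\mathcal{O}(n+m)$ terms into $T_{SSP}(n,m)$. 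I would also remark in passing that replacing \textsf{CC-ARB-Flow} by the matroid-intersection route of Theorem~\ref{thm:cc-arb-dag} re-proves Theorem~\ref{thm:cc-spt-arb}, only more slowly.
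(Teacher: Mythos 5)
Your proposal is correct and follows exactly the paper's own argument: run \textsf{SSP}, build $G_s$, invoke \textsf{CC-ARB-Flow}, with correctness via Observation~\ref{obs:SPS-DAG}, Corollary~\ref{cor:spt-arb-connection} and Lemma~\ref{lem:arb-q-col-flow}, and the running time obtained by summing the three steps. The only addition is your explicit remark that $T_{SSP}(n,m)=\Omega(n+m)$ absorbs the lower-order terms, which the paper leaves implicit.
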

\begin{remark}\label{rmk:flow-sssp}
By using the algorithm by van den Brand et al.~\cite{Brand0PKLGSS23} for \textsc{Minimum Cost Maximum Flow} for computing single source shortest path distances in $G$ and for solving \textsc{CC-ARB} in $G_s$, it follows that \textsc{CC-SPT} can be solved in $\mathcal{O}(m+\widehat{m}^{1+o(1)})$ time when the edge weights are integral and polynomially bounded in the size of the graph provided $G$ has only positive cycles. 
\end{remark}

\subsection{CC-ARB in Red-Blue Graphs}
\label{sec:rb}
In this section, we give an algorithm for \textsc{CC-ARB} on DAGs in the special case when $q=2$. We refer to a $2$-colored graph as a red-blue graph with the natural interpretation that red and blue are the colors. Since a color constraint $([\alpha_1,\alpha'_1],[\alpha_2,\alpha'_2])$ is equivalent to $([0,\min(\alpha_1,n-1-\alpha'_2)],[0,\min(\alpha_2,n-1-\alpha'_1)])$, we assume that there are constraints only on the maximum number of edges of each color. 

Consider a red-blue DAG $G_s$ and a color-constraint vector $\alpha=(\alpha_1,\alpha_2) \in \mathbb{N}^2$ where we desire $([0,\alpha_1],[0,\alpha_2])$-colored $s$-arborescence. We describe Algorithm \textsf{CC-RB-ARB} that outputs an $\alpha$-colored $s$-arborescence (if one exists) of $G_s$ in $\mathcal{O}(n+m)$ time. The algorithm is based on the observation that vertices in $V(G_s) \setminus \{s\}$ without red (resp. blue) edges entering them in $G_s$ must have a blue (resp. red) edge entering them in every solution and vertices with edges of both colors entering them in $G_s$ can have an edge of any of these colors entering them in a solution. Based on this observation, we partition the vertices into three classes and we show that the existence of an $\alpha$-colored $s$-arborescence can be determined by examining this partition. The remaining part of the algorithm computes such a solution (if one exists).

\begin{algorithm}[ht]
	\caption{\textsf{CC-RB-ARB}$(G_s,\alpha)$}
	\label{alg:spt-2-cols}
	\begin{algorithmic}[1]
	    \Require A red-blue DAG $G_s$ with $s$ designated as the root and $\alpha \in \mathbb{N}^2$
		\Ensure An $\alpha$-colored $s$-arborescence of $G_s$ (if one exists)
        \State Let $E_i$ denote the edges of $G_s$ that are colored using $i$. For each $v\in V(G_s)\setminus \{s\}$ and $i \in [2]$, compute the number $\pi(v,i)$ of edges in $E_i$ entering $v$ using a single linear scan through $E(G_s)$  	
        \blueLcomment{Construction of a partition of $V(G_s)\setminus \{s\}$}
        \State Partition the vertices of $V(G_s)\setminus \{s\}$ into $V_R$, $V_B$ and $V_{RB}$ as
        \State \hspace{.5cm} $V_R=\{v\in V(G_s)\setminus \{s\} : \pi(v,1)>0, \pi(v,2)=0 \}$
        \State \hspace{.5cm} $V_B=\{v\in V(G_s)\setminus \{s\} : \pi(v,1)=0, \pi(v,2)>0 \}$
        \State \hspace{.5cm} $V_{RB}=\{v\in V(G_s)\setminus \{s\} : \pi(v,1),\pi(v,2)>0 \}$
        \blueLcomment{Phase 3: construction of solution $T$}
        \If{$|V_R|>\alpha_1$ or $|V_B|>\alpha_2$}
            \State \Return $G$ has no $\alpha$-colored $s$-arborescence
        \Else
         \If {$|V_{RB}| > \alpha_1-|V_R|+\alpha_2-|V_B|$}
            \State \Return $G_s$ has no $\alpha$-colored $s$-arborescence
          \Else  
              \State Partition $V_{RB}$ into parts $X$ and $Y$ such that $|X|\leq \alpha_1-|V_R|$ and $|Y|\leq \alpha_2-|V_B|$ 
        \State Initialize graph $T$ with $V(T)=V(G_s)$ and $E(T)=\emptyset$
        \For {each $v \in V_R \cup X$}
            \State add an arbitrary red edge of $G_s$ entering $v$ to $T$
        \EndFor
        \For {each $v \in V_B \cup Y$} 
            \State add an arbitrary blue edge of $G_s$ entering $v$ to $T$
        \EndFor
        \State \Return $T$
        \EndIf
        \EndIf
	\end{algorithmic}
\end{algorithm}

\begin{mytheorem}
 \label{cc-spt-2cols}   
Algorithm \textsf{CC-RB-ARB} solves \textsc{CC-ARB} on red-blue DAGs with $n$ vertices and $m$ edges in $\mathcal{O}(n+m)$ time. 
\end{mytheorem}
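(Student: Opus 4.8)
The plan is to establish correctness via the characterization of $s$-arborescences of a DAG in Observation~\ref{obs:spt-G}: a spanning subgraph of $G_s$ is an $s$-arborescence precisely when exactly one edge enters each vertex of $V(G_s)\setminus\{s\}$. Since every vertex is reachable from $s$, each $v\in V(G_s)\setminus\{s\}$ has at least one incoming edge in $G_s$, so $\{V_R,V_B,V_{RB}\}$ really is a partition of $V(G_s)\setminus\{s\}$ and the classification in the algorithm is well-defined. The key structural observation is that selecting an $\alpha$-colored $s$-arborescence of $G_s$ is equivalent to selecting, for each non-root vertex, the \emph{color} of its unique incoming edge: a vertex in $V_R$ is forced to take a red edge, one in $V_B$ is forced to take a blue edge, and one in $V_{RB}$ may take either. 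Writing $X\subseteq V_{RB}$ for the set of $V_{RB}$-vertices that receive a red incoming edge and $Y=V_{RB}\setminus X$ for those that receive a blue one, the color counts of the resulting tree are $|V_R|+|X|$ (red) and $|V_B|+|Y|$ (blue).

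First I would prove the ``only if'' direction: given an $\alpha$-colored $s$-arborescence $T$, every vertex of $V_R$ contributes a red edge and every vertex of $V_B$ a blue edge, so $|V_R|\le\alpha_1$ and $|V_B|\le\alpha_2$; moreover the $V_{RB}$-vertices receiving red in $T$ number at most $\alpha_1-|V_R|$ and those receiving blue at most $\alpha_2-|V_B|$, and since these two sets partition $V_{RB}$ we obtain $|V_{RB}|\le(\alpha_1-|V_R|)+(\alpha_2-|V_B|)$. These are exactly the negations of the conditions under which the algorithm reports ``no solution''. For the ``if'' direction, when all three inequalities hold, a partition $V_{RB}=X\uplus Y$ with $|X|\le\alpha_1-|V_R|$ and $|Y|\le\alpha_2-|V_B|$ exists: assign vertices of $V_{RB}$ to $X$ greedily until $|X|=\min\{|V_{RB}|,\alpha_1-|V_R|\}$ and put the rest in $Y$, whose size bound follows from the third inequality. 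Adding one arbitrary red incoming edge to each vertex of $V_R\cup X$ and one arbitrary blue incoming edge to each vertex of $V_B\cup Y$ yields a spanning subgraph $T$ of $G_s$ with exactly one edge entering every non-root vertex; by Observation~\ref{obs:spt-G} this $T$ is an $s$-arborescence, and by construction it has at most $\alpha_1$ red and at most $\alpha_2$ blue edges, hence is $\alpha$-colored. Thus the algorithm outputs an $\alpha$-colored $s$-arborescence exactly when one exists.

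For the running time, Step~1 is a single linear scan through $E(G_s)$ which, besides computing each $\pi(v,i)$, can simultaneously record for every non-root vertex one incoming red edge and one incoming blue edge (whichever exist); this is $\mathcal{O}(n+m)$. Computing the partition $V_R,V_B,V_{RB}$ and the quantities $|V_R|,|V_B|,|V_{RB}|$ takes $\mathcal{O}(n)$ time, the two feasibility tests take $\mathcal{O}(1)$ time, and splitting $V_{RB}$ into $X$ and $Y$ and then attaching the already-recorded incoming edge of the prescribed color to each non-root vertex takes $\mathcal{O}(n)$ time. Hence the total running time is $\mathcal{O}(n+m)$.

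I do not expect a genuine obstacle here; the only point requiring care is to recognize that $q=2$ collapses the flow/matroid-intersection machinery of the previous sections into this simple two-way assignment, and to ensure that the ``arbitrary edge'' selections are precomputed during the initial scan so that building $T$ costs only $\mathcal{O}(n+m)$. Everything else follows directly from Observation~\ref{obs:spt-G}.
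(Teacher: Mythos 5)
Your proposal is correct and follows essentially the same route as the paper's proof: the same partition into $V_R$, $V_B$, $V_{RB}$, the same feasibility conditions on $\alpha_1$, $\alpha_2$ and $|V_{RB}|$, and the same appeal to Observation~\ref{obs:spt-G} to certify that the constructed subgraph is an $s$-arborescence. Your version is slightly more explicit about the ``if'' direction and about precomputing one incoming edge of each color per vertex during the initial scan to keep the construction phase linear, but these are presentational refinements rather than a different argument.
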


\begin{proof}
Consider a red-blue DAG $G_s$ and a color-constraint vector $\alpha \in \mathbb{N}^2$. The algorithm first partitions (in linear time) vertices of $V(G_s)\setminus\{s\}$ into three sets $V_R$, $V_B$ and $V_{RB}$. Then, it partitions (in linear time) $V_{RB}$ into $X$ and $Y$ such that $|X| \leq \alpha_1-|V_R|$ and $|Y| \leq \alpha_2-|V_B|$. Observe that if $G_s$ has an $\alpha$-colored $s$-arborescence $T^*$, then $|V_R|\leq \alpha_1$ and $|V_B|\leq \alpha_2$. Further, at most $\alpha_1-|V_R|$ vertices $X^*$ of $V_{RB}$ have a red edge entering them and at most $\alpha_2-|V_B|$ vertices $Y^*$ of $V_{RB}$ have a blue edge entering them in $T^*$. As $T^*$ is a spanning subgraph of $G$, it follows that $X^*$ and $Y^*$ partition $V_{RB}$. 
Therefore, if $|V_{RB}| > \alpha_1-|V_R| + \alpha_2-|V_B|$, then the input instance is a \no-instance. Otherwise, it is possible to take at most $\alpha_1-|V_R|$ arbitrary vertices of $V_{RB}$ into $X$ and the rest into $Y$. 

Finally, a subgraph $T$ of $G_s$ is constructed (in linear time) in such a way that for each vertex $v\in V(G_s)\setminus\{s\}$, there is exactly one edge entering $v$. Clearly, $T$ has at most $\alpha_1$ red edges and at most $\alpha_2$ blue edges. From Observation \ref{obs:spt-G}, $T$ is $s$-arborescence. 
\end{proof}

Observe that Theorem~\ref{cc-spt-2cols} implies that \textsc{CC-SPT} on $2$-colored weighted digraphs $G$ with $n$ vertices, $m$ edges and only positive cycles can be solved in $\mathcal{O}(T_{SSP}(n,m))$ time. 

\section{Minimum Weight Color-Constrained Shortest Path Trees}
\label{sec:min-cc-spt}
In this section, we describe an algorithm for the \textsc{Minimum Weight Color Constrained Shortest Path Tree} (\textsc{Min CC-SPT}) problem of finding an $\alpha$-colored $s$-SPT of minimum total weight. The key subroutine is an algorithm for \textsc{Minimum Weight Color-Constrained Arborescence} (\textsc{Min CC-ARB}) on DAGs where given a $q$-colored DAG with a weight function on its edges, a source vertex $s$ and a color-constraint vector $\alpha \in (\mathbb{N} \times \mathbb{N})^q$, the objective is to find an $\alpha$-colored $s$-arborescence of minimum total weight.  

Consider a $q$-colored digraph $G$ with an associated weighting $w$ on its edges such that $G$ has only positive-weight cycles, a vertex $s \in V(G)$ and a color-constraint vector $\alpha=([\alpha_1,\alpha'_1],\ldots,[\alpha_q,\alpha'_q])\in (\mathbb{N} \times \mathbb{N})^q$. Like our algorithm for \textsc{CC-SPT}, the algorithm for \textsc{Min CC-SPT} involves the following three steps.
\begin{itemize}
    \item Step~1: Use a single source shortest path algorithm on $(G,w,s)$ to compute $\delta(s,v)$ for each $v \in V(G)$.
    \item Step~2: Construct $G_s$ with $V(G_s)=V(G)$, $E(G_s)= \{ (u,v) \in E(G) : \delta(s,v) = \delta(s,u) + w(u,v) \}$. If $G$ is $w$-weighted, then $G_s$ is $w'$-weighted where $w'=w_{|_{E(G_s)}}$. 
    \item Step~3: Use Algorithm~\textsf{Min-CC-ARB-Flow} on $(G_s, w', \alpha)$, to find a minimum cost $\alpha$-colored $s$-arborescence $T$ in $G_s$, if one exists. If a solution $T$ is found, return $T$. Otherwise, report that no solution exists.
\end{itemize}

By Observation~\ref{obs:SPS-DAG}, the graph $G_s$, computed in Step~2 is a DAG. Using $G_s$ as the input DAG, Algorithm~\textsf{Min-CC-ARB-Flow} computes a minimum weight $\alpha$-constrained $s$-arborescence $T$ of $G_s$, if one exists. It does so by casting the problem as a minimum cost maximum flow problem. Then, from Observation~\ref{obs:spt-out-tree}, $T$ is a minimum weight $\alpha$-colored $s$-SPT in $G$. Crucial to the correctness of Algorithm \textsf{Min-CC-ARB-Flow} is the following result on the structure of a solution. 

\begin{mylemma}
\label{lemma:minWt_predecessor}
Let $G_s$ be a $q$-colored DAG with a weight function $w: E(G) \rightarrow \mathbb{R}$ on its edges and $T$ be an $\alpha$-colored $s$-arborescence of minimum total weight in $G_s$. For each $i \in [q]$ and $v \in V(T) \setminus \{s\}$, if $(u,v) \in E(T)$ is an $i$-colored edge entering $v$, then $(u,v)$ is an edge that minimizes $w(u,v)$ over all $i$-colored edges entering $v$.
\end{mylemma}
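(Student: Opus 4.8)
The plan is to prove this by a straightforward exchange argument. Suppose $T$ is a minimum weight $\alpha$-colored $s$-arborescence of $G_s$, and suppose toward a contradiction that for some vertex $v \in V(T)\setminus\{s\}$ the edge $(u,v) \in E(T)$ entering $v$ has color $i$, but there is another $i$-colored edge $(u',v) \in E(G_s)$ entering $v$ with $w(u',v) < w(u,v)$. I would then form $T' := (T \setminus \{(u,v)\}) \cup \{(u',v)\}$, that is, swap the incoming edge of $v$ for the cheaper one of the same color, and argue that $T'$ is again an $\alpha$-colored $s$-arborescence, but of strictly smaller total weight, contradicting the minimality of $T$.

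The two things to verify are that $T'$ is still an $s$-arborescence, and that it is still $\alpha$-colored. For the color count: since $(u,v)$ and $(u',v)$ have the same color $i$, and we removed one and added the other, the number of edges of each color in $T'$ equals that in $T$, so $T'$ is (exactly as) $\alpha$-colored. For the arborescence property, I would invoke Observation~\ref{obs:spt-G}: $T'$ is a spanning subgraph of $G_s$ (note $(u',v) \in E(G_s)$ by hypothesis), and for every vertex $x \in V(G_s)\setminus\{s\}$ exactly one edge of $T'$ enters $x$ — for $x \neq v$ this is the same edge as in $T$, and for $x = v$ it is the single edge $(u',v)$. Hence by Observation~\ref{obs:spt-G}, $T'$ is an $s$-arborescence of $G_s$. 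Finally, $w(T') = w(T) - w(u,v) + w(u',v) < w(T)$, contradicting the choice of $T$.

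I expect essentially no obstacle here; the only point requiring a moment's care is confirming that the in-degree-one condition of Observation~\ref{obs:spt-G} is preserved after the swap, which is immediate since we replace exactly one incoming edge of $v$ by exactly one other incoming edge of $v$ and touch no other vertex. (It is worth noting that this argument does not even use acyclicity directly beyond what Observation~\ref{obs:spt-G} already encapsulates.) This lemma then justifies that in the flow-network formulation of \textsc{Min CC-ARB}, one may, without loss of generality, pre-select for each vertex $v$ and each color $i$ present at $v$ the cheapest $i$-colored edge entering $v$, and assign that edge's weight as the cost of the corresponding arc in the network.
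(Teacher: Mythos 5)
Your proposal is correct and uses essentially the same exchange argument as the paper: swap the incoming edge of $v$ for a cheaper $i$-colored one and contradict minimality. The only cosmetic difference is how the arborescence property of the swapped tree is verified --- you cite the in-degree characterization of Observation~\ref{obs:spt-G}, whereas the paper argues directly that the tail of the new edge cannot lie in the subtree rooted at $v$ (else $G_s$ would have a cycle); both justifications are valid and both ultimately rest on the acyclicity of $G_s$.
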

\begin{proof}
    Consider $v \in V(G_s) \setminus \{s\}$ and let $(u,v)\in E(T)$ be an $i$-colored edge. 
    Assume on the contrary that there is an $i$-colored edge $(x,v)$ entering $v$ such that $w(x,v)<w(u,v)$. Since $T$ is an arborescence, $(x,v) \notin E(T)$. Let $T_u$ and $T_v$ denote the subtrees of $T-(u,v)$ containing $u$ and $v$, respectively.
    Note that $v$ is the root in $T_v$.
    If $x \in V(T_v)$, then it follows that $G_s$ has a cycle contradicting the fact that $G_s$ is a DAG. Therefore, $x \in T_u$. Then, the arborescence $T-(u,v)+(x,v)$ is an $\alpha$-colored $s$-arborescence with a weight strictly smaller than that of $T$ leading to a contradiction.  
\end{proof}

Now, we describe Algorithm~\textsf{Min-CC-ARB-Flow} which is similar to Algorithm~\textsf{CC-ARB-Flow}. 
The key difference from Algorithm~\textsf{CC-ARB-Flow} is that we assign a cost (in addition to the capacity and demand) to every edge of the auxiliary flow network $H$ constructed and seek a minimum cost maximum flow instead of an arbitrary maximum flow. 

\begin{algorithm}[ht]
	\caption{\textsf{Min-CC-ARB-Flow}$(G_s, w, \alpha)$}
	\label{alg:flow-arb-min}
	\begin{algorithmic}[1]
		\Require A $q$-colored DAG $G_s$ where $s$ is designated as root, a weight function $w:E(G_s) \rightarrow \mathbb{R}$ and $\alpha \in (\mathbb{N} \times \mathbb{N})^q$
		\Ensure A minimum weight $\alpha$-colored $s$-arborescence of $G_s$ (if one exists)
        \State Let $E_i$ denote the edges of $G_s$ that are colored using $i$. For each $v\in V(G_s)\setminus \{s\}$ and $i \in [q]$, compute the number $\pi(v,i)$ of edges in $E_i$ entering $v$ using a single linear scan through $E(G_s)$   \label{line:pi}
        \blueLcomment{Construction of auxiliary flow network $H$}
        \State Initialize a flow network $H$ with $V(H)=\{s^*,t^*\} \cup C \cup U$ where $C=[q]$ and $U=V(G_s) \setminus \{s\}$, source $s^*$, sink $t^*$ and edges $E(H)=\emptyset$ along with capacity function $c:E(H) \rightarrow \mathbb{Z}^{\geq 0}$, demand function $d:E(H) \rightarrow \mathbb{Z}^{\geq 0}$ and cost function $\widehat{c}:E(H) \rightarrow \mathbb{R}^{\geq 0}$
        \For {each $i \in C$}
            \State add $(s^*,i)$ to $E(H)$ and set $c(s^*,i)=\alpha'_i$, $d(s^*,i)=\alpha_i$, $\widehat{c}(s^*,i)=0$
        \EndFor
        \For {each $i \in C$ and $v\in U$ with $\pi(v,i)>0$}
            \State add $(i,v)$ to $E(H)$ and set $c(i,v)=1$, $d(i,v)=0$, $\widehat{c}(i,v)= \min \{ w(u,v) : (u,v) \in E_i \cap E(G_s)\}$ \label{line:cost-min-wt}
        \EndFor
        \For {each $v\in U$} 
            \State add $(v,t^*)$ to $E(H)$ and set $c(v,t^*)=1$, $d(v,t^*)=0$, $\widehat{c}(v,t^*)=0$
        \EndFor
        \blueLcomment{Construction of a solution $T$}
        \State Compute an integral minimum cost feasible flow $f$ of value $n-1$ (if one exists) in $H$. 
        \If{no such flow $f$ exists}
            \State \Return $\emptyset$ to convey that $G_s$ has no $\alpha$-colored $s$-arborescence
        \Else
            \State Initialize graph $T$ with $V(T)=V(G_s)$ and $E(T)=\emptyset$
            \For{each $v \in U$ such that $f(i,v)>0$ for some $i\in C$}
                \State add an edge $(x,v)$ with $w(x,v)=\min \{ w(u,v) : (u,v) \in E_i \cap E(G_s)\}$ to $T$
            \EndFor
            \State \Return $T$
        \EndIf
	\end{algorithmic}
\end{algorithm}

\begin{mylemma}
\label{lem:min-arb-q-col-flow}
Algorithm~\textsf{Min-CC-ARB-Flow} solves \textsc{Min CC-ARB} on $q$-colored edge-weighted DAGs with $n$ vertices and $m$ edges in $\mathcal{O}(m+T_{MCMF}(\widehat{n},\widehat{m}))$ time.
\end{mylemma}  

\begin{proof}
First, we will discuss the correctness of the algorithm.  Consider an instance $(G_s, w, \alpha)$ of \textsc{Min CC-ARB}. As in \textsc{CC-ARB} algorithm, the flow graph $H$ cannot have a flow of value larger than $n-1$ and by Lemma~\ref{lem:spt-maxflow}, $H$ has a feasible flow $f$ of value $n-1$ if and only if $G_s$ has an $\alpha$-colored $s$-arborescence. Therefore, it remains to argue that if $|f|=n-1$, then $T$ is a minimum weight $\alpha$-colored $s$-arborescence of $G_s$. Observe that, by construction of $T$, the weight of $T$ is equal to the cost of the  flow $f$.  If $T$ is not a minimum weight $\alpha$-colored $s$-arborescence, then there is an $\alpha$-colored $s$-arborescence $T'$ with weight strictly smaller than $T$. By Lemma~\ref{lemma:minWt_predecessor}, for each $i \in [q]$ and $v \in V(T')$, if $(u,v) \in E(T')$ is an $i$-colored edge entering $v$, then $w(u,v)$ has the minimum weight over all $i$-colored edges entering $v$. Hence, in Line~\ref{line:cost-min-wt} of the algorithm, for each edge $(u, v)$ in $T'$, an edge of capacity $1$ and cost $w(u,v)$ from vertex $i$ to vertex $v$ gets added to $H$. These edges guarantee a flow $f'$ in $H$ with value $n-1$ and cost equal to the weight of $T'$. Since the weight of $T'$ is less than that of $T$ which in turn equal to the cost of $f$, this leads to a contradiction to the minimality of $f$. Therefore, $T$ is a minimum weight $\alpha$-colored $s$-arborescence of $G_s$. 

Now, we analyze the running time of the algorithm. Line~\ref{line:pi} computes $\pi(v,i)$ for each $v \in V(G_s) \setminus \{s\}$ and $i \in [q]$ in $\mathcal{O}(n+m)$ time which in turn is used in constructing the flow network $H$ that has $\widehat{n}=\mathcal{O}(n+q)$ vertices and $\widehat{m}=\mathcal{O}(\min(m,nq))$ edges in $\mathcal{O}(\widehat{n}+\widehat{m})$ time. Then, we construct the network $H_{n-1}$ using Lemma~\ref{lem:feasible-flow-max-flow} in $\mathcal{O}(\widehat{n}+\widehat{m})$ time. We compute an integral minimum cost maximum flow $f'$ in $H_{n-1}$  in $T_{MCMF}(\widehat{n},\widehat{m})$ time. If the value of $f'$ is less than $n-1+\sum_{i=1}^n \alpha_i$, then $G_s$ has no $\alpha$-colored $s$-arborescence (by Lemma \ref{lem:spt-maxflow} and Lemma~\ref{lem:feasible-flow-max-flow}) and we declare the same. Otherwise, using Lemma~\ref{lem:feasible-flow-max-flow} and Remark~\ref{rmk:generalization}, we obtain a minimum cost feasible flow $f$ of value $n-1$ in $H$ in $\mathcal{O}(\widehat{n}+\widehat{m})$ time. Finally, by Lemma~\ref{lem:spt-maxflow}, we construct an $\alpha$-colored $s$-arborescence of $G_s$ in $\mathcal{O}(\widehat{m})$ time. Therefore, the overall running time of the algorithm is $\mathcal{O}(m+T_{MCMF}(\widehat{n},\widehat{m}))$.
\end{proof}

The correctness of the algorithm for \textsc{Min CC-SPT} follows by Observation~\ref{obs:SPS-DAG} and Corollary~\ref{cor:spt-arb-connection}. The running time of Step~1 of the algorithm is $\mathcal{O}(T_{SSP}(n,m))$. Step~2 of the algorithm takes $\mathcal{O}(n+m)$ time. By Lemma~\ref{lem:min-arb-q-col-flow}, Step~3 takes $\mathcal{O}(m+T_{MCMF}(\widehat{n},\widehat{m}))$ time. Thus, we have the following result. 

\begin{mytheorem}
\label{thm:spt-q-col-minwt}
There is an algorithm that solves \textsc{Min CC-SPT} on $q$-colored edge-weighted digraphs $G$ with $n$ vertices, $m$ edges and only positive cycles in $\mathcal{O}(T_{SSP}(n,m)+T_{MCMF}(\widehat{n},\widehat{m})$ time where $T_{MCMF}(\widehat{n},\widehat{m})$ denotes the running time of an algorithm that computes an integral minimum cost maximum flow in a network with $\widehat{n}$ vertices and $\widehat{m}$ edges. 
\end{mytheorem}  



\subsection{Min CC-ARB in Red-Blue Graphs}
\label{sec:minrb}
In this section, we give an algorithm \textsf{Min-CC-RB-ARB} for the \textsc{Min CC-ARB} problem in the special case when $q=2$. This is a modification of Algorithm \textsf{CC-RB-ARB} given in Section \ref{sec:rb}. The key difference from \textsf{CC-RB-ARB} is a careful partitioning of $V_{RB}$ (the set of vertices which have incoming edges of both colors in $G_s$) based on the minimum weights of the incoming edges of each color.
   
\begin{algorithm}[hbt!]
	\caption{\textsf{Min-CC-RB-ARB}$(G_s,w,\alpha)$}
	\label{alg:min-arb-2-cols}
	\begin{algorithmic}[1]
 	      \Require A $w$-weighted red-blue DAG $G_s$ where $s$ is designated as the root and $\alpha \in \mathbb{N}^2$
		\Ensure A minimum weight $\alpha$-colored $s$-arborescence of $G_s$ (if one exists) 
         \State Let $E_i$ denote the edges of $G_s$ that are colored using $i$. For each $v\in V(G_s)\setminus \{s\}$ and $i \in [2]$, compute the number $\pi(v,i)$ of edges in $E_i$ entering $v$ using a single linear scan through $E(G_s)$   
        \blueLcomment{Partitioning $V(G_s) \setminus \{s\}$} 
        \State Partition $V(G_s)\setminus \{s\}$ into $V_R$, $V_B$ and $V_{RB}$ as
        \State \hspace{.5cm} $V_R=\{v\in V(G_s)\setminus \{s\} : \pi(v,1)>0, \pi(v,2)=0 \}$
        \State \hspace{.5cm} $V_B=\{v\in V(G_s)\setminus \{s\} : \pi(v,1)=0, \pi(v,2)>0 \}$
        \State \hspace{.5cm} $V_{RB}=\{v\in V(G_s)\setminus \{s\} : \pi(v,1)>0, \pi(v,2)>0 \}$
        \If{$\alpha_1 + \alpha_2 < n-1$ or $|V_R|>\alpha_1$ or $|V_B|>\alpha_2$}  \label{step:NoSol}
            \State \Return $G_s$ has no $\alpha$-colored $s$-arborescence
        \EndIf
        \For {each $v \in V_{RB}$}
            \State $r_v =$ the minimum weight of a red edge of $G_s$ entering $v$
            \State $b_v =$ the minimum weight of a blue edge of $G_s$ entering $v$
        \EndFor
        \State Partition $V_{RB}$ in to $V'_R$ and $V'_B$ as
            \State \hspace{.5cm} $V'_R = \{v \in V_{RB} : r_v \leq b_v\}$
            \State \hspace{.5cm} $V'_B = \{v \in V_{RB} : r_v > b_v\}$
        \If {$|V_R \cup V'_R| > \alpha_1$} \label{step:ReduceR}
            \State Sort the vertices in $V'_R$ in the non-decreasing order 
            of $(b_v - r_v)$
            \State Fix $S$ as the set of first $|V_R \cup V'_R| - \alpha_1$ vertices in the above order
        \ElsIf{$|V_B \cup V'_B| > \alpha_2$}  \label{step:ReduceB}
            \State Sort the vertices in $V'_B$ in the non-decreasing order 
            of $(r_v - b_v)$
            \State Fix $S$ as the set of first $|V_B \cup V'_B| - \alpha_2$ vertices in the above order
        \Else
            \State $S=\emptyset$ 
        \EndIf
        \blueLcomment{Construction of solution $T$}
        \State Initialize graph $T$ with $V(T)=V(G_s)$ and $E(T)=\emptyset$
        \For {each $v \in V_R \cup (V'_R \vartriangle S)$}
            \State add a minimum weight red edge $(x,v) \in E(G_s)$ to $T$
        \EndFor
        \For {each $v \in V_B \cup (V'_B \vartriangle S)$} 
            \State add a minimum weight blue edge $(x,v) \in E(G_s)$ to $T$
        \EndFor
        \State \Return $T$
	\end{algorithmic}
\end{algorithm}

Now, we proceed to the correctness and running time analysis of Algorithm \textsf{Min-CC-RB-ARB}. Let $T_{Sort}(n)$ denote the running time of the fastest sorting algorithm that sorts $n$ weights.  

\begin{mytheorem}\label{thm:minWt-red-blue}
Algorithm \textsf{Min-CC-RB-ARB} solves \textsc{Min CC-ARB} in $\mathcal{O}(m+T_{Sort}(n))$ time on red-blue edge-weighted DAGs with $n$ vertices and $m$ edges.
\end{mytheorem}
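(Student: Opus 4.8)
The plan is to prove correctness of Algorithm \textsf{Min-CC-RB-ARB} by first reducing to the structural observation already established, and then arguing that the specific greedy choice made by the algorithm is optimal. Recall from Observation~\ref{obs:spt-G} that a spanning subgraph $T$ of $G_s$ is an $s$-arborescence precisely when every vertex of $V(G_s)\setminus\{s\}$ has exactly one incoming edge in $T$; and from Lemma~\ref{lemma:minWt_predecessor} that in an optimal solution the incoming edge of each color at each vertex may be taken to be the minimum-weight edge of that color entering the vertex. So the only real decision is, for each vertex, \emph{which color} its incoming edge has, subject to the global budget ($\alpha_1$ red edges, $\alpha_2$ blue edges). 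Vertices in $V_R$ are forced to be red, vertices in $V_B$ are forced to be blue, so the feasibility condition ``$|V_R|\le\alpha_1$ and $|V_B|\le\alpha_2$ and $\alpha_1+\alpha_2\ge n-1$'' (Line~\ref{step:NoSol}) is exactly the condition for a solution to exist, mirroring Theorem~\ref{cc-spt-2cols}; this part I would dispatch quickly by the same argument used there, noting additionally that $|V_R|+|V_B|+|V_{RB}| = n-1$ so if $|V_R|\le\alpha_1$, $|V_B|\le\alpha_2$ and $\alpha_1+\alpha_2\ge n-1$ then the remaining $|V_{RB}|$ vertices can always be distributed to meet both budgets.

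The heart of the proof is the optimality of the partition of $V_{RB}$. First I would observe that assigning $v\in V_{RB}$ its color costs $r_v$ if red and $b_v$ if blue, independently across vertices, so the total weight of any feasible solution is $\sum_{v\in V_R} r_v + \sum_{v\in V_B} b_v + \sum_{v\in V_{RB}} (\text{chosen cost at } v)$, and only the last sum is under our control. The ``unconstrained'' optimum sends each $v$ to the cheaper of $r_v,b_v$, i.e. to $V'_R$ if $r_v\le b_v$ and $V'_B$ otherwise, incurring the baseline cost $\sum_v \min(r_v,b_v)$. If this already respects both budgets (the \textbf{Else} branch, $S=\emptyset$), it is clearly optimal. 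Otherwise at most one budget is violated — here I would note that $|V_R|+|V'_R| > \alpha_1$ and $|V_B|+|V'_B|>\alpha_2$ cannot hold simultaneously, since summing them gives $n-1 + |V_{RB}| > \alpha_1+\alpha_2 \ge n-1$, which forces $|V_{RB}|>0$ but also, more to the point, using $|V'_R|+|V'_B|=|V_{RB}|$ we get $(|V_R|+|V_B|) + |V_{RB}| > \alpha_1+\alpha_2$, contradicting $n-1 = |V_R|+|V_B|+|V_{RB}| \le \alpha_1+\alpha_2$; so exactly one of Lines~\ref{step:ReduceR},\ref{step:ReduceB} triggers. WLOG the red budget is exceeded by $k := |V_R\cup V'_R| - \alpha_1$ vertices. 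Every feasible solution must move at least $k$ vertices out of $V'_R$ into blue; moving $v$ from red to blue increases cost by exactly $b_v - r_v \ge 0$. So the minimum-cost feasible solution moves \emph{exactly} $k$ such vertices and picks those minimizing $\sum (b_v - r_v)$ — which is precisely what the sort in Line~\ref{step:ReduceR} selects as $S$. A standard exchange argument finishes it: given any optimal solution, if it moves a vertex $u$ with larger $b_u - r_u$ while leaving red some $v\in S$ with smaller $b_v - r_v$, swapping them does not increase cost and does not violate any budget (the red count is unchanged). Then $T$ assigns red to $V_R \cup (V'_R \triangle S)$ and blue to $V_B \cup (V'_B \triangle S)$, and by Observation~\ref{obs:spt-G} this is an $s$-arborescence; one checks the red count is $|V_R| + |V'_R| - |S| = \alpha_1$ and the blue count is $n-1-\alpha_1 \le \alpha_2$, so it is $\alpha$-colored, and its weight matches the lower bound just derived.

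For the running time: computing $\pi(v,i)$, the sets $V_R,V_B,V_{RB}$, and the values $r_v,b_v$ are all done in one scan of $E(G_s)$, so $\mathcal{O}(n+m)$; the feasibility check is $\mathcal{O}(n)$; the single sort (of a subset of $V_{RB}$, hence at most $n$ numbers) is $T_{Sort}(n)$; selecting $S$ and building $T$ is $\mathcal{O}(n+m)$, since for each chosen vertex we again need the actual minimum-weight edge of the appropriate color, which can be recorded during the initial scan. Total $\mathcal{O}(m + T_{Sort}(n))$.

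I expect the main obstacle to be the clean statement and justification of the exchange argument for the choice of $S$ — in particular making sure that (i) at most one budget can be violated, so the algorithm's single-case structure is exhaustive, and (ii) that independence of the per-vertex cost choices genuinely holds, i.e. that no arborescence constraint beyond ``one incoming edge per vertex'' couples the color choices (this is where acyclicity of $G_s$ via Observation~\ref{obs:SPS-DAG}, already used in Observation~\ref{obs:spt-G} and Lemma~\ref{lemma:minWt_predecessor}, is essential). Everything else is bookkeeping.
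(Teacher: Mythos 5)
Your proposal is correct and follows essentially the same route as the paper's proof: reduce to a per-vertex color choice via Observation~\ref{obs:spt-G} and Lemma~\ref{lemma:minWt_predecessor}, note that the two overflow conditions cannot hold simultaneously because $\alpha_1+\alpha_2\ge n-1=|V_R|+|V_B|+|V_{RB}|$, and show that any feasible solution must reassign at least $k=|V_R\cup V'_R|-\alpha_1$ vertices of $V'_R$ to blue at a cost of at least $\sum_{v}(b_v-r_v)$ over the reassigned set, so the $k$ smallest differences are optimal. The paper states this as a direct lower bound $w(T_{\alpha})\ge w(T_{min})+\sum_{v\in S_{\alpha}}(b_v-r_v)$ rather than an exchange argument, but the two are interchangeable here.
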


\begin{proof}
Consider a red-blue $w$-weighted DAG $G_s$ and a color-constraint vector $\alpha \in \mathbb{N}^2$. The running time is dominated by the sorting subroutines. Hence the overall running time is as claimed.

It is clear that there is no $\alpha$-colored $s$-arborescence in $G_s$ when any of the conditions in Line~\ref{step:NoSol} is true. Let $T$ be a solution returned by the algorithm.  It is easy to check that $T$ is an $\alpha$-colored $s$-arborescence. We show next that $T$ has the minimum weight among all $\alpha$-colored $s$-arborescence in $G_s$. Let $T_{\alpha}$ denote a minimum weight $\alpha$-colored $s$-arborescence in $G_s$. Also, let $T_{min}$ denote a minimum weight $s$-arborescence (without any constraint on colors) of $G_s$. It is easy to see that for each vertex $v \in V(G_s) \setminus \{s\}$, the edge entering $v$ in $T_{min}$ has the minimum weight among all edges entering $v$ in $G_s$ ($q = 1$ case of Lemma~\ref{lemma:minWt_predecessor}).

Notice that the conditions in Line~\ref{step:ReduceR} and Line~\ref{step:ReduceB} will not be simultaneously true since $\alpha_1 + \alpha_2 \geq n-1 = |V_R \cup V'_R \cup V_B \cup V'_B|$. If both the conditions are false, then $S = \emptyset$. Then, for every vertex $v \in V(G_s) \setminus \{s\}$, the edge of $T$ entering $v$ is a minimum weight edge of $G_s$ entering $v$.  By Observation~\ref{obs:spt-G}, $T_{min}$ is an $s$-arborescence of $G_s$ and hence, for every vertex $v \in V(G) \setminus \{s\}$, $T_{min}$ contains at least one edge of $G_s$ entering $v$. Therefore $w(T_{min}) \geq w(T)$ and hence $T$ is also a minimum weight $s$-arborescence. 

Now suppose the condition $|V_R \cup V'_R| > \alpha_1$ in Line~\ref{step:ReduceR} is true. In this case, since no vertex in $V_R$ has an incoming blue edge in $G_s$, at least $k = |V_R \cup V'_R| - \alpha_1$ vertices in $V'_R$ should get an incoming blue edge in $T_{\alpha}$. Let $S_{\alpha}$ denote the set of vertices in $V'_R$ which get an incoming blue edge in $T_{\alpha}$. Hence $w(T_{\alpha}) \geq w(T_{min}) + \sum_{v \in S_{\alpha}}(b_v - r_v)$. Since the set $S$ selected by the algorithm  consists of first $k$ vertices of $V'_R$ in the order of non-decreasing $(b_v - r_v)$ value, $\sum_{v \in S}(b_v - r_v) \leq \sum_{v \in S_{\alpha}}(b_v - r_v)$. Hence $w(T) = w(T_{min}) + \sum_{v \in S}(b_v - r_v) \leq w(T_{\alpha})$. Hence $T$ is also a minimum weight $\alpha$-colored $s$-arborescence of $G$. The case when the condition $|V_B \cup V'_B| > \alpha_2$ in Line~\ref{step:ReduceB} is true can be analysed similarly.
\end{proof}

Observe that Theorem~\ref{thm:minWt-red-blue} implies that \textsc{Min CC-SPT} on $2$-colored weighted digraphs $G$ with $n$ vertices, $m$ edges and only positive cycles can be solved can be solved in $\mathcal{O}(T_{SSP}(n,m)+T_{Sort}(n))$ time.

\section{Color-Constrained Shortest Paths}
\label{sec:sp}
In this section, we deviate slightly from our problem of focus and consider the related problem of finding an $\alpha$-colored shortest path starting from a specific source $s$ and ending at a specific destination $t$. 

\parprobNoPar{\textsc{Color-Constrained Shortest Path} (\textsc{CC-SP})}{}{A $w$-weighted digraph $G$ with coloring function $\chi:E(G) \rightarrow [q]$, two vertices $s,t\in V(G)$ and a tuple $\alpha=(\alpha_1,\ldots,\alpha_q)\in \mathbb{N}^q$.}{Does $G$ have an $\alpha$-colored shortest $s$-$t$-path?}
\\

A vertex variant of \textsc{CC-SP} has been recently studied by Bentert et al.~\cite{BentertKN23}. In the variant they considered, the input is a $q$-vertex-colored edge-weighted digraph $G$ and the question is to decide if $G$ has a shortest $s$-$t$-path with equal number of vertices of every color. It was shown that this problem is \NP-complete even when all edges of $G$ have weight one. We may further assume that the shortest path length $\ell$ is such that $\ell+1$ is a multiple of $q$, otherwise the input instance is trivially a \textsf{No}-instance. With this assumption, we can easily show that this problem has a polynomial-time reduction to the \textsc{Vertex Color-Constrained Shortest Path} problem defined below. 

\parprobNoPar{\textsc{Vertex Color-Constrained Shortest Path} (\textsc{VCC-SP})}{}{A $w$-weighted digraph $G$ with coloring function $\chi:V(G) \rightarrow [q]$, two vertices $s,t\in V(G)$ and a tuple $\alpha=(\alpha_1,\ldots,\alpha_q)\in \mathbb{N}^q$.}{Does $G$ have an $\alpha$-colored shortest $s$-$t$-path?}

The reduction to \textsc{VCC-SP} is simply to output the input graph $G$ and the color-constraint vector $\alpha$ in which $\alpha_i=(\ell+1)/q$ for each $i \in [q]$.  By this reduction, it follows that \textsc{VCC-SP} is \NP-complete. We now show that \textsc{CC-SP} and \textsc{VCC-SP} are in fact equivalent with respect to polynomial-time reductions. 

\begin{mylemma}
    \label{vertex_version_to_edge_version}
    Given an instance of \textsc{VCC-SP}, there is a linear-time algorithm that produces an equivalent instance of \textsc{CC-SP}.
\end{mylemma}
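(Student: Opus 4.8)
We want to convert a \textsc{VCC-SP} instance into an equivalent \textsc{CC-SP} instance in linear time. The natural idea is to move color information from vertices to edges. Given the vertex-colored digraph $G$ with $\chi: V(G)\to[q]$, weights $w$, terminals $s,t$, and constraint vector $\alpha$, I would build a new digraph $G'$ by \emph{subdividing} each vertex, or equivalently by splitting each vertex $v$ into an ``in-copy'' $v^-$ and an ``out-copy'' $v^+$ joined by an internal edge $(v^-,v^+)$ that carries the color $\chi(v)$. Each original edge $(u,v)$ of $G$ becomes an edge $(u^+,v^-)$ in $G'$; we must give these original edges a color that does not interfere with the count, so I would introduce one fresh dummy color $q+1$ and assign it to every edge $(u^+,v^-)$. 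The internal edges get weight $0$ and the original edges keep weight $w(u,v)$, so shortest-path lengths are preserved and there is a natural weight-preserving bijection between $s$-$t$-paths in $G$ and $s^+$-$t^-$-paths (or $s^-$-$t^+$-paths, after adjusting which copies are the terminals) in $G'$. The constraint vector for \textsc{CC-SP} is $\alpha' = (\alpha_1,\ldots,\alpha_q,\infty)$, i.e.\ we leave the dummy color unconstrained (or, to stay within $\mathbb N^{q+1}$, set $\alpha'_{q+1}$ to the number of edges, which is an upper bound no path can violate).

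\textbf{Key steps, in order.} First, describe the construction of $G'$ precisely: $V(G')=\{v^-,v^+ : v\in V(G)\}$, the internal edges $\{(v^-,v^+):v\in V(G)\}$ colored $\chi(v)$ with weight $0$, and the transit edges $\{(u^+,v^-):(u,v)\in E(G)\}$ colored $q+1$ with weight $w(u,v)$; set the new source to $s^-$ and new sink to $t^+$; set $\alpha'_i=\alpha_i$ for $i\le q$ and $\alpha'_{q+1}=|E(G)|$. Second, establish the path correspondence: a path $P = s=v_0,v_1,\ldots,v_k=t$ in $G$ maps to the path $P' = s^-, s^+, v_1^-, v_1^+, \ldots, t^-, t^+$ in $G'$, and this map is a bijection between $s$-$t$-paths in $G$ and $s^-$-$t^+$-paths in $G'$ (here one checks that every $s^-$-$t^+$-path in $G'$ must alternate internal/transit edges and hence arises this way, using that the only edge leaving $v^-$ is $(v^-,v^+)$ and the only edges entering $v^-$ are transit edges). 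Third, observe $w(P')=w(P)$ since internal edges have weight $0$; hence $P$ is a shortest $s$-$t$-path in $G$ iff $P'$ is a shortest $s^-$-$t^+$-path in $G'$. Fourth, check the color accounting: for each $i\in[q]$, the number of edges of color $i$ in $P'$ equals the number of internal edges $(v^-,v^+)$ with $\chi(v)=i$, which is exactly the number of color-$i$ vertices on $P$; and the dummy color $q+1$ is unconstrained, so $P'$ is $\alpha'$-colored iff $P$ is $\alpha$-colored. Combining, $G'$ has an $\alpha'$-colored shortest $s^-$-$t^+$-path iff $G$ has an $\alpha$-colored shortest $s$-$t$-path. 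Finally, note $|V(G')|=2|V(G)|$ and $|E(G')|=|V(G)|+|E(G)|$ and the construction is a single pass over $V(G)$ and $E(G)$, so it runs in linear time.

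\textbf{Expected main obstacle.} The construction itself is routine; the only real care is in the bijection argument — specifically verifying that \emph{every} $s^-$-$t^+$-path in $G'$ decomposes as an alternation of internal and transit edges and therefore corresponds to a genuine (simple) path in $G$, with no ``shortcuts'' possible. This hinges on the structural fact that in $G'$ the in-copy $v^-$ has out-degree exactly $1$ (only the internal edge leaves it) and the out-copy $v^+$ has in-degree exactly $1$, so once a path enters $v^-$ it is forced through $v^+$ next; simplicity of the $G'$-path then corresponds exactly to simplicity of the $G$-path since the copies $v^-,v^+$ are used together or not at all. One minor point to handle cleanly is the degenerate case $s=t$ (or the possibility that $s$ or $t$ should be an isolated endpoint); this is dispatched by a direct check. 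A secondary, purely cosmetic issue is keeping $\alpha'$ inside $\mathbb N^{q+1}$ rather than using $\infty$ — setting $\alpha'_{q+1}=|E(G)|$ resolves it since no path has that many edges.
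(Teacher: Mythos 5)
Your reduction is correct, but it is a genuinely different construction from the one in the paper. You split every vertex $v$ into an in-copy $v^-$ and an out-copy $v^+$, carry the vertex color on the internal edge $(v^-,v^+)$, and park all original edges on a fresh dummy color $q+1$ with a vacuous bound $\alpha'_{q+1}=|E(G)|$. The paper instead keeps the graph essentially intact: it adds a single new source $s'$ with a zero-weight edge $(s',s)$ and colors every edge $(u,v)$ by the color $\chi(v)$ of its head, so that on an $s'$-$t$ path each vertex of the original path (including $s$) is the head of exactly one edge and the edge-color multiset of the new path equals the vertex-color multiset of the old one. Both arguments hinge on the same bijection-plus-weight-preservation template, and your verification of the forced alternation through $v^-\to v^+$ (out-degree $1$ at $v^-$, in-degree $1$ at $v^+$) is exactly the right thing to check for your gadget. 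What the paper's version buys is economy: $n+1$ vertices and $m+1$ edges instead of $2n$ and $n+m$, no extra color class, and in particular the number of colors stays exactly $q$, which the paper later exploits when observing that this reduction is parameter-preserving in $q$ for the \W[1]-hardness transfer; your version uses $q+1$ colors, which is still parameter-preserving in the FPT sense but marginally less clean. What your version buys is genericity: the vertex-splitting gadget localizes each vertex's color on its own private edge and would survive modifications (e.g.\ if one also wanted to encode vertex weights or per-vertex capacities), whereas the head-coloring trick is specific to the fact that every non-initial vertex of a path is entered exactly once. Your handling of the $s=t$ degenerate case and of keeping $\alpha'$ in $\mathbb{N}^{q+1}$ is also correct.
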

\begin{proof}
Consider an instance $I=(G,\chi,w,s,t,\alpha)$ of \textsc{VCC-SP} where $\chi:V(G)\rightarrow [q]$ is a coloring on $V(G)$. Let $G'$ denote the graph obtained from $G$ by adding a new vertex $s'$ as an in-neighbour of $s$. Let $w':E(G')\rightarrow \mathbb{R}$ be the weight function defined as $w'((s',s))=0$ and for every other edge $e$, $w'(e)=w(e)$. Let $\chi':E(G')\rightarrow [q]$ denote the coloring defined as $\chi'((u,v))=\chi(v)$ for each $(u,v) \in E(G')$. Let $I'=(G',\chi',w',s',t,\alpha)$ be the instance of \textsc{CC-SP} corresponding to $I$. Clearly, this reduction runs in linear time. 

It can be seen that for every $s$-$t$-path $P=(s=v_1,\ldots ,v_k=t)$ in $G$ there is a corresponding $s'$-$t$-path $P'=(s',s=v_1,\dots, v_k=t)$ in $G'$ and this correspondence is bijective. By the definition of $\chi'$, the number of vertices of color $i$ in $P$ is equal to the number of edges of color $i$ in $P'$ for each color $i \in [q]$. Further, $w(P)=w'(P')$.  From this, it follows that $I$ is a \textsf{Yes}-instance of  \textsc{VCC-SP} if and only if $I'$ is a \textsf{Yes}-instance of \textsc{CC-SP}. 
\end{proof}
Since \textsc{CC-SP} can be easily seen to be in \NP, the above reduction shows that \textsc{CC-SP} is \NP-complete.  Next, we show a reduction from \textsc{CC-SP} to \textsc{VCC-SP}.
\begin{mylemma}
\label{edge_version_to_vertex_version}
Given an instance of \textsc{CC-SP}, there is a linear-time algorithm that produces an equivalent instance of \textsc{VCC-SP}.
\end{mylemma}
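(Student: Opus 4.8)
The plan is to reduce \textsc{CC-SP} to \textsc{VCC-SP} by \emph{edge subdivision}, pushing the colour of each edge onto a fresh vertex placed in its interior. Starting from an instance $I=(G,\chi,w,s,t,\alpha)$ of \textsc{CC-SP} with $\chi:E(G)\to[q]$, I would build a digraph $G'$ with vertex set $V(G)\cup\{x_e : e\in E(G)\}$ where, for each $e=(u,v)\in E(G)$, the arc $e$ is replaced by the two arcs $(u,x_e)$ and $(x_e,v)$; I set $w'(u,x_e)=w(e)$ and $w'(x_e,v)=0$, and define a vertex colouring $\chi':V(G')\to[q+1]$ by $\chi'(x_e)=\chi(e)$ for $e\in E(G)$ and $\chi'(v)=q+1$ for every $v\in V(G)$. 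The new constraint vector is $\alpha'=(\alpha_1,\ldots,\alpha_q,\,|V(G')|)$, so that its last coordinate is satisfied by \emph{every} $s$-$t$ path; the source, sink and the first $q$ budgets are unchanged. Since $|V(G')|=n+m$ and $|E(G')|=2m$, this construction runs in linear time, giving the instance $I'=(G',\chi',w',s,t,\alpha')$ of \textsc{VCC-SP}.

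Next I would exhibit a weight-preserving bijection between $s$-$t$ paths of $G$ and $s$-$t$ paths of $G'$. To a path $P=(v_1,\ldots,v_k)$ in $G$ with arcs $e_j=(v_j,v_{j+1})$, associate $P'=(v_1,x_{e_1},v_2,\ldots,x_{e_{k-1}},v_k)$, which is an $s$-$t$ path in $G'$ with $w'(P')=\sum_j\bigl(w'(v_j,x_{e_j})+w'(x_{e_j},v_{j+1})\bigr)=\sum_j w(e_j)=w(P)$. Conversely, $s,t\in V(G)$ and each subdivision vertex $x_e$ has a unique in-neighbour (the tail of $e$) and a unique out-neighbour (the head of $e$), so any $s$-$t$ path in $G'$ must alternate between $V(G)$ and subdivision vertices, traversing exactly the pair $(u,x_e),(x_e,v)$ whenever it meets $x_e$; deleting the subdivision vertices yields an $s$-$t$ walk on distinct vertices of $G$, i.e.\ an $s$-$t$ path, and this inverts the map. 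As the correspondence preserves weights, the shortest $s$-$t$ path weights in $G$ and $G'$ coincide and $P$ is a shortest $s$-$t$ path in $G$ iff $P'$ is a shortest $s$-$t$ path in $G'$.

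Finally I would match the colour constraints: for each $i\in[q]$ the colour-$i$ vertices of $P'$ are precisely the $x_{e_j}$ with $\chi(e_j)=i$, so $P'$ has exactly as many colour-$i$ vertices as $P$ has colour-$i$ edges, while the number of colour-$(q+1)$ vertices of $P'$ equals $k\le n\le|V(G')|=\alpha'_{q+1}$ and is therefore automatically within budget. Hence $P'$ is $\alpha'$-colored exactly when $P$ is $\alpha$-colored, and combining this with the previous paragraph shows that $I$ is a \textsf{Yes}-instance of \textsc{CC-SP} iff $I'$ is a \textsf{Yes}-instance of \textsc{VCC-SP}.

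\textbf{Main obstacle.} The one genuine subtlety is that \textsc{VCC-SP} forces a colour onto every vertex, including the original vertices of $G$, whereas only the subdivision vertices should ``count'' towards the edge-colour budgets. This is exactly what necessitates the extra colour $q+1$ together with the vacuous constraint $\alpha'_{q+1}=|V(G')|$ — harmless because no simple $s$-$t$ path can use more than $|V(G')|$ vertices of any single colour. Everything else is a routine check of the bijection and of weight/colour-count preservation.
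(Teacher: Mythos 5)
Your proof is correct, but it uses a genuinely different construction from the paper. You subdivide every edge, place the edge's color on the new interior vertex, dump all original vertices into a fresh color class $q+1$, and neutralize that class with the vacuous budget $\alpha'_{q+1}=|V(G')|$; the resulting graph has $n+m$ vertices and $2m$ edges, so the reduction is honestly linear, and your path bijection and color/weight bookkeeping are all sound. The paper instead passes to the directed line graph $L(G)$ augmented with new terminals $s',t'$: each edge of $G$ becomes a vertex carrying its color, weights are split so that $w'(P')=2w(P)$, the terminals are absorbed into color $1$ via $\alpha'_1=\alpha_1+2$, and the number of colors stays exactly $q$. The trade-offs are worth noting. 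Your subdivision keeps the instance size linear but costs one extra color, which slightly weakens the downstream application (Theorem~\ref{thm:spt-hard-easy} invokes the $\mathcal{O}(mn^q)$ algorithm of Bentert et al.\ on the reduced instance, and an exponent of $q+2$ rather than $q+1$ in $m$ would result from your version on dense graphs). The paper's line-graph reduction preserves $q$ exactly but produces a graph with $\mathcal{O}(mn)$ edges, so it is linear only in the output size, not in the input size; your construction is the one that literally meets the ``linear-time'' claim in the lemma statement. Both establish the stated equivalence, and both preserve the absence of negative-weight cycles (your cycles keep their weight exactly; the paper's are doubled), so either suffices for the \NP-completeness and algorithmic consequences drawn later.
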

\begin{proof}
Consider an instance $I=(G,\chi,w,s,t,\alpha)$ of \textsc{CC-SP} where $G$ is a $q$-colored digraph. Let $L(G)$ denote the directed line graph of $G$. For an edge $e \in E(G)$, $v_{e}$ denotes the vertex in $L(G)$ corresponding to $e$.  By definition, there is an edge from a vertex $v_{e}$ to a vertex $v_{f}$ in $L(G)$ if and only if there is a vertex $x$ in $G$ and edges $e,f$ such that $e$ enters $x$ and $f$ leaves $x$.  

Let $X$ be the subset of vertices of $L(G)$ which correspond to the edges leaving $s$ in $G$ and $Y$ be the subset of vertices of $L(G)$ which correspond to the edges entering $t$ in $G$. 
Let $G'$ be the digraph obtained from $L(G)$ by adding two new vertices $s'$, $t'$ such that there is an edge from $s'$ to every vertex in $X$ and an edge from every vertex in $Y$ to $t'$.  This construction establishes a bijective mapping between $s$-$t$-paths in $G$ and $s'$-$t'$-paths in $G'$, described as follows. Consider any $s$-$t$-path $P=(s=v_1,v_2,\ldots, v_k=t)$ in $G$. Let $E(P)=\{e_1,\ldots,e_{k-1}\}$ where $e_i=(v_i,v_{i+1})$ for $i \in [k-1]$. The path $P'$ in $G'$ corresponding to $P$ is defined as the path $(s',v_{e_1},\ldots,v_{e_{k-1}},t')$. It can be verified that this is indeed a bijective correspondence. 

Let $w'$ be the weight function $E(G')\rightarrow \mathbb{R}$ defined as follows. For each edge $(s',v_e) \in E(G')$ leaving $s'$, set $w'((s',v_e))=w(e)$ and for each edge $(v_e,t')\in E(G')$ entering $t'$, set $w'((v_e,t'))=w(e)$. For any other edge $(v_e,v_f)\in E(G')$ where $e,f\in E(G)$, set $w'((v_e,v_f))=w(e)+w(f)$. 
Observe that the weight function $w'$ is defined in such a way that for any $s$-$t$-path $P$ in $G$, its corresponding path $P'$ in $G'$ satisfies $w'(P')=2w(P)$.  

Define a coloring $\chi':V(G') \rightarrow [q]$ from $\chi$ by setting $\chi'(s')=\chi'(t')=1$ and $\chi'(v_e)=\chi(e)$ for each $e\in E(G)$. Let $\alpha'$ be $(\alpha_1+2, \alpha_2, \ldots,\alpha_q)$. Note that for any $s$-$t$-path $P$ in $G$ and the corresponding $s'$-$t'$-path $P'$ in $G'$, $P$ has exactly $\alpha_i$ edges of color $i$ for each $i \in [q]$ if and only if $P'$ has exactly $\alpha_i$ vertices of color $i$ for each $i \in [q] \setminus \{1\}$ and exactly $\alpha_1+2$ vertices of color 1.  

Let $I'=(G',\chi',w',s',t',\alpha')$ be the instance of \textsc{VCC-SP} corresponding to $I$. Clearly, this reduction runs in linear time. From the definition of the bijective mapping of paths discussed above, it also follows that every $s$-$t$-path $P$ in $G$ bijectively corresponds to an $s'$-$t'$-path $P'$ in $G'$ such that $w'(P')=2w(P)$ and $P$ has at most $\alpha_i$ edges of color $i$ for each $i \in [q]$ if and only if $P'$ has at most $\alpha'_i$ vertices of color $i$ for each $i \in [q]$.  Hence, $I$ is a \textsf{Yes} instance of  \textsc{CC-SP} if and only if $I'$ is a \textsf{Yes} instance of \textsc{VCC-SP}. 
\end{proof}

Bentert et al.~\cite{BentertKN23} gave an $\mathcal{O}(m n^q)$ time algorithm for a variant of \textsc{VCC-SP}, where the edge weights considered were natural numbers and both upper bound and lower bound constraints are given on the number of vertices of each color in a shortest path produced as output. The correctness arguments given by Bentert et al.~\cite{BentertKN23} do not depend on the weights being natural numbers and hold even when the input graph real-weighted with no negative-weight cycles. Hence, for such graphs, we can use the algorithm by Bentert et al.~\cite{BentertKN23} with lower bounds for each color taken as zero, to solve \textsc{VCC-SP}. Observe that for an input graph $G$ with $n$ vertices and $m$ edges, in the graph $G'$ obtained by the reduction given in the proof of Lemma~\ref{edge_version_to_vertex_version}, the number of vertices is $\mathcal{O}(m)$ and the number of edges is $O(mn)$. Therefore, by Lemma \ref{vertex_version_to_edge_version} and Lemma \ref{edge_version_to_vertex_version}, 
we have the following result. 

\begin{mytheorem}
\label{thm:spt-hard-easy}
    \textsc{CC-SP} is \NP-complete and can be solved in $\mathcal{O}(nm^{q+1})$ time when the input graph has no negative-weight cycles. 
\end{mytheorem}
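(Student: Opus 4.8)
The plan is to obtain both halves of the statement by composing the two linear-time reductions already in hand — Lemma~\ref{vertex_version_to_edge_version} (from \textsc{VCC-SP} to \textsc{CC-SP}) and Lemma~\ref{edge_version_to_vertex_version} (from \textsc{CC-SP} to \textsc{VCC-SP}) — with two known facts about \textsc{VCC-SP}: it is \NP-complete (established earlier via the reduction from the vertex-colored equal-count shortest path problem of Bentert et al.~\cite{BentertKN23}, which is \NP-complete already for unit edge weights), and on real-weighted digraphs with no negative-weight cycles it can be solved in $\mathcal{O}(MN^{q})$ time on an instance with $N$ vertices and $M$ edges by the algorithm of Bentert et al.\ run with all lower bounds set to $0$. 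So no new combinatorics is needed; the work lies in composing reductions and bounding the size of the transformed instance.

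For \NP-completeness, I would first observe that \textsc{CC-SP} lies in \NP: an $s$-$t$ path is a polynomial-size certificate, and verifying it reduces to checking that it is a simple $s$-$t$ path, that it respects the color bounds, and that its weight equals $\delta(s,t)$ — the last computable by Bellman--Ford once there are no negative-weight cycles, in particular in the unit-weight instances arising from the hardness construction. For hardness, I would chain the reductions: Bentert et al.~\cite{BentertKN23} give \NP-completeness of their vertex-colored problem with unit weights; the reduction stated just before the definition of \textsc{VCC-SP} makes \textsc{VCC-SP} \NP-hard; and Lemma~\ref{vertex_version_to_edge_version} carries this to \textsc{CC-SP} in linear time. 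Together with membership this gives \NP-completeness.

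For the running-time bound, given a \textsc{CC-SP} instance on a digraph $G$ with $n$ vertices, $m$ edges, and no negative-weight cycles, I would apply Lemma~\ref{edge_version_to_vertex_version} to construct in linear time an equivalent \textsc{VCC-SP} instance on a digraph $G'$, and then run Bentert et al.'s algorithm on it. The size bounds are $|V(G')|=\mathcal{O}(m)$ (one vertex per edge of $G$, plus $s'$ and $t'$) and $|E(G')|=\mathcal{O}(mn)$, since the directed-line-graph part contributes $\sum_{x\in V(G)} d^-(x)\,d^+(x)\le (n-1)\sum_x d^-(x)=\mathcal{O}(mn)$ edges and the edges incident to $s'$ or $t'$ add only $\mathcal{O}(n)$ more. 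Before invoking the algorithm I must check its two hypotheses on $G'$: that there are no negative-weight cycles, and that real weights with no negative-weight cycles are admissible (the latter is noted in the excerpt, and the lower bounds are taken to be $0$). The former needs an argument: since $s'$ has no incoming edge and $t'$ no outgoing edge, every cycle of $G'$ lies inside $L(G)$ and, by the weighting in the proof of Lemma~\ref{edge_version_to_vertex_version}, has weight exactly twice that of the corresponding closed walk in $G$; a closed walk in $G$ decomposes into simple cycles, each non-negative by hypothesis, so the walk, and hence the $G'$-cycle, has non-negative weight. The resulting running time on $G'$ is $\mathcal{O}(|E(G')|\cdot|V(G')|^{q})=\mathcal{O}(mn\cdot m^{q})=\mathcal{O}(nm^{q+1})$, which absorbs the linear-time reduction.

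I expect the main obstacle to be precisely the negative-cycle check for $G'$: it is the one step that depends on the internal structure of the reduction rather than on a cited black box, and it needs the (easy but not entirely trivial) fact that a closed walk decomposes into simple cycles together with the factor-two identity for cycle weights under the line-graph weighting. Membership in \NP, the composition of reductions, and the size accounting for $G'$ are all routine.
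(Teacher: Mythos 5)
Your proposal is correct and follows essentially the same route as the paper: \NP-hardness by composing Lemma~\ref{vertex_version_to_edge_version} with the hardness of \textsc{VCC-SP}, and the $\mathcal{O}(nm^{q+1})$ bound by applying Lemma~\ref{edge_version_to_vertex_version} and running the $\mathcal{O}(MN^{q})$ algorithm of Bentert et al.\ on $G'$ with $|V(G')|=\mathcal{O}(m)$ and $|E(G')|=\mathcal{O}(mn)$. Your explicit check that $G'$ has no negative-weight cycles (via the factor-two identity for cycles in $L(G)$ and the decomposition of closed walks into simple cycles) addresses a point the paper leaves implicit, and your argument for it is sound.
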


From a parameterized complexity perspective, we observe that the reduction from \textsc{VCC-SP} to \textsc{CC-SP} described in Lemma \ref{vertex_version_to_edge_version} is also a parameter-preserving reduction where the parameter is the number of colors of the input graph. Therefore, the \W[1]-hardness of \textsc{VCC-SP} with respect to the number of colors of the input graph as the parameter~\cite{BentertKN23} also holds for \textsc{CC-SP}.

\section{Concluding Remarks}
\label{sec:concl}
The complexity of \textsc{CC-ARB} in general digraphs for fixed $q$ is an interesting direction of research. Recall that \textsc{CC-ARB} is polynomial-time solvable when $q=2$ and \NP-hard for arbitrary $q$. From a parameterized complexity perspective, $q$ is the most natural parameter for \textsc{CC-ARB} and the fixed-parameter tractability of the problem is open. On a related line, determining the existence of a color-constrained shortest $s$-$t$-path turns out to be fixed-parameter tractable with the length of the path as the parameter but fixed-parameter intractable with $q$ as the parameter. The intractability follows since the above mentioned reduction from the vertex-colored variant is parameter-preserving.

The most intriguing color-constrained subgraph, at least to us, is an $\alpha$-colored perfect matching in a bipartite graph. The problem of finding a $(k, n/2 - k)$-colored perfect matching in an $n$-vertex graph, often called \textsc{Exact Perfect Matching}\footnote{A related problem with edges assigned to (possibly overlapping) sets instead of color classes, was already studied in 1977 by Itai, Rodeh and Tanimoto~\cite{itai1977some, tanimoto1978some} under the name \textsc{Restricted Perfect Matching}. They showed that for arbitrary number of sets, this problem is \NP-complete. However, their reduction produces instances where the sets are overlapping. They also showed that in the special case when $q = 2$ and only one set has a constraint, that is when $\alpha = (\alpha_1, n/2)$, the problem admits a polynomial-time solution. They conjectured that the $q = 2$ case when both sets are constrained is \NP-complete.}, was studied by Papadimitriou and Yannakakis \cite{PapadimitriouY82} in 1982. Mulmuley, Vazirani and Vazirani \cite{mulmuley1987matching} gave a randomized polynomial-time algorithm for \textsc{Exact Perfect Matching} in 1987. This problem still remains as one of those rare examples of a problem in \RP\ not yet known to be in \P.

\bibliography{ref}

\begin{thebibliography}{10}

\bibitem{AhujaMO93}
Ravindra~K. Ahuja, Thomas~L. Magnanti, and James~B. Orlin.
\newblock {\em Network flows - theory, algorithms and applications}.
\newblock Prentice Hall, 1993.

\bibitem{BarahonaP87}
Francisco Barahona and William~R. Pulleyblank.
\newblock Exact arborescences, matchings and cycles.
\newblock {\em Discret. Appl. Math.}, 16(2):91--99, 1987.
\newblock \href {https://doi.org/10.1016/0166-218X(87)90067-9} {\path{doi:10.1016/0166-218X(87)90067-9}}.

\bibitem{BentertKN23}
Matthias Bentert, Leon Kellerhals, and Rolf Niedermeier.
\newblock Fair short paths in vertex-colored graphs.
\newblock In {\em 37th {AAAI} Conference on Artificial Intelligence}, pages 12346--12354. {AAAI} Press, 2023.
\newblock \href {https://doi.org/10.1609/AAAI.V37I10.26455} {\path{doi:10.1609/AAAI.V37I10.26455}}.

\bibitem{BernsteinNW22}
Aaron Bernstein, Danupon Nanongkai, and Christian Wulff{-}Nilsen.
\newblock Negative-weight single-source shortest paths in near-linear time.
\newblock In {\em 63rd {IEEE} Annual Symposium on Foundations of Computer Science, {FOCS} 2022, Denver, CO, USA, October 31 - November 3, 2022}, pages 600--611. {IEEE}, 2022.
\newblock \href {https://doi.org/10.1109/FOCS54457.2022.00063} {\path{doi:10.1109/FOCS54457.2022.00063}}.

\bibitem{BrezovecCG88}
Carl Brezovec, G{\'{e}}rard Cornu{\'{e}}jols, and Fred~W. Glover.
\newblock A matroid algorithm and its application to the efficient solution of two optimization problems on graphs.
\newblock {\em Math. Program.}, 42(1-3):471--487, 1988.
\newblock \href {https://doi.org/10.1007/BF01589417} {\path{doi:10.1007/BF01589417}}.

\bibitem{BringmannCF23}
Karl Bringmann, Alejandro Cassis, and Nick Fischer.
\newblock Negative-weight single-source shortest paths in near-linear time: Now faster!
\newblock In {\em 64th {IEEE} Annual Symposium on Foundations of Computer Science, {FOCS} 2023, Santa Cruz, CA, USA, November 6-9, 2023}, pages 515--538. {IEEE}, 2023.
\newblock \href {https://doi.org/10.1109/FOCS57990.2023.00038} {\path{doi:10.1109/FOCS57990.2023.00038}}.

\bibitem{Diestel06}
Reinhard Diestel.
\newblock {\em {Graph {T}heory}}.
\newblock Springer-Verlag Berlin Heidelberg, 2006.

\bibitem{Erickson15}
Jeff Erickson.
\newblock Extensions of maximum flows: lower bounds, node demands, minimum-cost circulations, cycle canceling.
\newblock 2015.
\newblock URL: \url{https://courses.engr.illinois.edu/cs498dl1/sp2015/lectures.html}.

\bibitem{Gabow83}
Harold~N. Gabow.
\newblock An {E}fficient {R}eduction {T}echnique for {D}egree-{C}onstrained {S}ubgraph and {B}idirected {N}etwork {F}low {P}roblems.
\newblock In {\em 15th Annual {ACM} Symposium on Theory of Computing}, pages 448--456. {ACM}, 1983.
\newblock \href {https://doi.org/10.1145/800061.808776} {\path{doi:10.1145/800061.808776}}.

\bibitem{Gabow85}
Harold~N. Gabow.
\newblock A {S}caling {A}lgorithm for {W}eighted {M}atching on {G}eneral {G}raphs.
\newblock In {\em 26th Annual Symposium on Foundations of Computer Science}, pages 90--100. {IEEE} Computer Society, 1985.
\newblock \href {https://doi.org/10.1109/SFCS.1985.3} {\path{doi:10.1109/SFCS.1985.3}}.

\bibitem{GabowT84}
Harold~N. Gabow and Robert~Endre Tarjan.
\newblock Efficient {A}lgorithms for a {F}amily of {M}atroid {I}ntersection {P}roblems.
\newblock {\em J. Algorithms}, 5(1):80--131, 1984.
\newblock \href {https://doi.org/10.1016/0196-6774(84)90042-7} {\path{doi:10.1016/0196-6774(84)90042-7}}.

\bibitem{Gusfield84}
Dan Gusfield.
\newblock Matroid optimization with the interleaving of two ordered sets.
\newblock {\em Discret. Appl. Math.}, 8(1):41--50, 1984.
\newblock \href {https://doi.org/10.1016/0166-218X(84)90077-5} {\path{doi:10.1016/0166-218X(84)90077-5}}.

\bibitem{itai1977some}
Alon Itai and Michael Rodeh.
\newblock Some matching problems.
\newblock In {\em International {C}olloquium on {A}utomata, {L}anguages, and {P}rogramming}, pages 258--268. Springer, 1977.
\newblock \href {https://doi.org/10.1007/3-540-08342-1_20} {\path{doi:10.1007/3-540-08342-1_20}}.

\bibitem{tanimoto1978some}
Alon Itai, Michael Rodeh, and Steven~L Tanimoto.
\newblock Some matching problems for bipartite graphs.
\newblock {\em Journal of the ACM}, 25(4):517--525, 1978.
\newblock \href {https://doi.org/10.1145/322092.322093} {\path{doi:10.1145/322092.322093}}.

\bibitem{KangP15}
Donggu Kang and James Payor.
\newblock Flow {R}ounding.
\newblock {\em CoRR}, abs/1507.08139, 2015.
\newblock \href {https://arxiv.org/abs/1507.08139} {\path{arXiv:1507.08139}}.

\bibitem{KhullerRY95}
Samir Khuller, Balaji Raghavachari, and Neal~E. Young.
\newblock Balancing {M}inimum {S}panning {T}rees and {S}hortest-{P}ath {T}rees.
\newblock {\em Algorithmica}, 14(4):305--321, 1995.
\newblock \href {https://doi.org/10.1007/BF01294129} {\path{doi:10.1007/BF01294129}}.

\bibitem{Lawler76}
Eugene~L. Lawler.
\newblock Combinatorial {O}ptimization.
\newblock {\em Network and Matroids}, 1976.

\bibitem{mulmuley1987matching}
Ketan Mulmuley, Umesh~V Vazirani, and Vijay~V Vazirani.
\newblock Matching is as easy as matrix inversion.
\newblock In {\em 19th annual ACM symposium on Theory of computing}, pages 345--354, 1987.

\bibitem{PapadimitriouY82}
Christos~H. Papadimitriou and Mihalis Yannakakis.
\newblock The complexity of restricted spanning tree problems.
\newblock {\em Journal of the ACM}, 29(2):285--309, 1982.
\newblock \href {https://doi.org/10.1145/322307.322309} {\path{doi:10.1145/322307.322309}}.

\bibitem{RendlL89}
Franz Rendl and Matthias Leclerc.
\newblock A multiply constrained matroid optimization problem.
\newblock {\em Discret. Math.}, 73(1-2):207--212, 1989.
\newblock \href {https://doi.org/10.1016/0012-365X(88)90149-5} {\path{doi:10.1016/0012-365X(88)90149-5}}.

\bibitem{Schrijver03}
Alexander Schrijver.
\newblock {\em Combinatorial {O}ptimization - {P}olyhedra and {E}fficiency}.
\newblock Springer Berlin, Heidelberg, 2003.

\bibitem{Brand0PKLGSS23}
Jan van~den Brand, Li~Chen, Richard Peng, Rasmus Kyng, Yang~P. Liu, Maximilian~Probst Gutenberg, Sushant Sachdeva, and Aaron Sidford.
\newblock A {D}eterministic {A}lmost-{L}inear {T}ime {A}lgorithm for {M}inimum-{C}ost {F}low.
\newblock In {\em 64th {IEEE} Annual Symposium on Foundations of Computer Science}, pages 503--514. {IEEE}, 2023.
\newblock \href {https://doi.org/10.1109/FOCS57990.2023.00037} {\path{doi:10.1109/FOCS57990.2023.00037}}.

\end{thebibliography}

\appendix
\section{Maximum Flow with Lower and Upper Bound Constraints}

In this section, we give a proof of Lemma~\ref{lem:feasible-flow-max-flow} in detail. Note that the construction used in the proof is a suitable adaptation of a well-known construction that reduces the problem of finding a  feasible flow to a maximum flow computation (see for example~\cite{Erickson15}). 
\feasibleflow*
\begin{proof}

    The network $H_k$ is constructed from $H$ by adding a new source vertex $s'$ and a new sink vertex $t'$ such that there is an edge from $s'$ to every vertex in $V(H)$ and an edge from every vertex in $V(H)$ to $t'$. 
     The capacity function $c' : V(H_k) \times V(H_k) \rightarrow \mathbb{R}^{\geq 0}$ is defined as follows.
$$
c'(u,v)= 
\begin{cases}
\underset{x\in N^{-}(v)}\sum d(x,v) & \text{if } u=s' \text{ and } v\in V(H)\setminus \{s\} \\
\underset{y\in N^{+}(u)}\sum d(u,y) & \text{if } v=t' \text{ and } u\in V(H)\setminus \{t\} \\
c(u,v) - d(u,v) & \text{if }(u, v) \in E(H) \\
k & \text{if }(u,v)= (s', s) \text{ or } (u, v) = (t, t')
\end{cases}
$$

Suppose $f: E(H) \rightarrow \mathbb{R}^{\geq 0}$ is a feasible flow in $H$ with $|f|=k$. Consider the following function $f': E(H_k) \rightarrow \mathbb{R}^{\geq 0}$:

$$
f'(u,v)= 
\begin{cases}
\underset{x\in N^{-}(v)}\sum d(x,v) & \text{if } u=s' \text{ and } v\in V(H)\setminus \{s\} \\
\underset{y\in N^{+}(u)}\sum d(u,y) &\text{if } v=t' \text{ and } u\in V(H)\setminus \{t\} \\
f(u,v)- d(u,v) & \text{if }(u, v) \in E(H) \\
k & \text{if }(u,v)= (s', s) \text{ or } (u, v) = (t, t')
\end{cases}
$$

\noindent Now we will show that $f'$ is a  flow in $H_k$. \\

\noindent\textit{Capacity and Non-negativity Constraints: } 
By definition, the capacity and non-negativity constraints are satisfied for the edges in $E(H_k)\setminus E(H)$. Since $f$ is a feasible flow in $H$, for any edge $(u,v)\in E(H_k)\cap E(H)$,  $d(u,v)\leq f(u,v)\leq c(u,v)$. Therefore, $0\leq f(u,v)-d(u,v)\leq c(u,v)-d(u,v)$ i.e., $0\leq f'(u,v)\leq c'(u,v)$. Thus, the capacity and non-negativity constraints on $E(H_k)$ are satisfied.\\

\noindent\textit{Flow Conservation: }
For the vertex $s\in V(H_k)$,
\begin{quote}
$\underset{u\in N_{H_k}^-(s)}\sum f'(u,s)= f'(s',s)= k$ and 
\end{quote}
\begin{align*}
\underset{v\in N_{H_k}^+(s)}\sum f'(s,v) &= f'(s,t')+\underset{\substack{v\in N_{H_k}^+(s)\setminus \{t'\}}}\sum f'(s,v)\\
&= \underset{v \in N_{H}^+(s)}\sum d(s,v)+ \underset{v\in N_{H}^+(s)}\sum (f(s,v)-d(s,v))\\ &= \underset{v\in N_{H}^+(s)}\sum f(s,v) \\ &=k 
\end{align*}
Thus, the conservation constraint is satisfied for the vertex $s$. We can similarly show that the conservation constraint is also satisfied for the vertex $t\in V(H_k)$. Now, consider any vertex $v\in V(H_k) \setminus \{s',t',s,t\}$. We can obtain the flow conservation at $v$ as follows. 

\begin{align*}
    \underset{u\in N_{H_k}^-(v)}\sum f'(u,v)&= f'(s',v)+ \underset{\substack{u\in N_{H_k}^-(v)}\setminus \{s'\}}\sum f'(u,v)\\
    &= \underset{u\in N_{H}^-(v)}\sum d(u,v) + \underset{u\in N_{H}^-(v)}\sum (f(u,v)-d(u,v))\\
    &=\underset{u\in N_{H}^-(v)}\sum f(u,v)
\end{align*}

and 
\begin{align*}
    \underset{w\in N_{H_k}^+(v)}\sum f'(v,w)&= f'(v,t')+ \underset{\substack{w\in N_{H_k}^+(v)} \setminus \{t'\}}\sum f'(v,w),\\
    &= \underset{w\in N_{H}^+(v)}\sum d(v,w) + \underset{w\in N_{H}^+(v)}\sum (f(v,w)-d(v,w)),\\
    &=\underset{w\in N_{H}^+(v)}\sum f(v,w).
\end{align*}

\noindent Since $f$ is a valid flow in $H$, by flow conservation property of $f$, $\underset{u\in N_{H}^-(v)}\sum f(u,v)=\underset{w\in N_{H}^+(v)}\sum f(v,w)$ for each $v\in V(H)$. Hence, it follows that the flow conservation constraints are satisfied by $f'$ for all the vertices in $V(H_k)\setminus \{s', t'\}$.

Now, we will look at the value of $f'$.  
\begin{eqnarray*}
& |f'| = \underset{v\in V(H_k)} \sum f'(s',v) &= f(s', s) + \underset{v \in V(H)\setminus \{s\}} \sum f'(s', v)\\
& &= k + \underset{v \in V(H)\setminus \{s\}}  \sum ~ \underset {x\in N^{-}_H(v)} \sum d(x, v) \\
& &= k + \underset{(u, v)\in E(H)} \sum d(u, v) \hspace{0.2cm} 
\end{eqnarray*}
The last equality uses our assumption that there are no incoming edges to $s$ in $H$. This completes the proof that if $H$ has a feasible flow of value $k$, then $H_k$ has a flow $f'$ of value $k+ \underset{(u, v)\in E(H)} \sum d(u, v)$.

Conversely, suppose $H_k$ has a $(s',t')$-flow $f'$ of value $k+ \underset{(u, v)\in E(H)} \sum d(u, v)$.  
We define a function $f: E(G)\rightarrow \mathbb{R}^{\geq 0}$ on $H$ as $$f(u,v)= f'(u,v)+ d(u,v), \text{ for all }(u,v)\in E(H).$$
For every edge $(u,v)\in E(H)$, the function $f$ satisfies the lower bound and upper bound constraints because we have $0\leq f'(u,v)\leq c'(u,v)=c(u,v)-d(u,v)$ and it follows that $d(u,v)\leq f(u,v)\leq c(u,v)$.

It remains to show that $f$ satisfies flow conservation constraint at each vertex $v \in V(H) \setminus \{s,t\}$.  For this, we make some simple observations. 

\begin{eqnarray*}
    & \underset{v\in V(H_k)} \sum f'(s',v) = |f'| &= k+ \underset{(u, v)\in E(H)} \sum d(u, v) \text{ (By assumption)}\\
    & & = k + \underset{v \in V(H)\setminus \{s\}}  \sum ~ \underset {x\in N^{-}_H(v)} \sum d(x, v) \\
    & & = \underset{v \in V(H_k)} \sum c'(s', v)
\end{eqnarray*}
Hence, on every edge of the form $(s', v)$, $f'(s',v)=c(s',v)$. 
This means, 
\begin{equation}\label{eq:s-s'-flow}
    f'(s', s)= c'(s', s) =   k 
\end{equation} and
\begin{equation}\label{eq:inflow}
\text{for any } v \in V(H) \setminus {s}, f'(s', v)= c'(s', v) =   \underset{x\in N^{-}_H(v)}\sum d(x,v)
\end{equation}
Similarly, we can also show that  on every edge of the form $(u, t')$, $f'(u,t')=c(u,t')$. Therefore,
\begin{equation}\label{eq:outflow}
\text{for any } u \in V(H) \setminus {t}, f'(u, t')= c'(u, t') =   \underset{y\in N^{+}_H(u)}\sum d(u,y)
\end{equation}

\noindent Now, consider any vertex $v \in V(H) \setminus \{s,t\}$.
\begin{eqnarray*}
  &\underset{u\in N_{H}^-(v)}\sum f(u,v)&= \underset{u\in N_{H}^-(v)}\sum (f'(u,v)+d(u,v))\\ 
  & &=\underset{u\in N_{H}^-(v)}\sum f'(u,v)+ \underset{u\in N_{H}^-(v)}\sum d(u, v)\\
  & &=\left(\underset{u\in N_{H}^-(v)}\sum f'(u,v)\right)+f'(s',v) \text{\hspace{1cm} (By Eq. ~\ref{eq:inflow})}\\ 
  & &=\underset{u\in N_{H_k}^-(v)}\sum f'(u,v) 
\end{eqnarray*}
Similarly, for any vertex $v \in V(H) \setminus \{s,t\}$, using Eq.~\ref{eq:outflow}, we can show that $$\underset{w\in N_{H}^+(v)}\sum f(v,w)=\underset{w\in N_{H_k}^+(v)}\sum f'(v,w).$$
 By the flow conservation property of $f'$, it concludes that $f$ satisfies the conservation constraint for each vertex of $v\in V(H) \setminus \{s,t\}$. \\\\
Thus, $f$ is a feasible flow in $H$. It remains to show that the value of the flow $f$ is $k$. 

\begin{eqnarray*}
    &|f|=\underset{v\in N_{H}^+(s)}\sum f(s,v)&= \underset{v\in N_{H}^+(s)}\sum f'(s,v)+ \underset{v\in N_{H}^+(s)}\sum d(s,v)\\
    & &= \underset{v\in N_{H}^+(s)}\sum f'(s,v)+f'(s,t')\\
    & &=\underset{v\in N_{H_k}^+(s)}\sum f'(s,v)\\
    & &= f'(s',s) \hspace{1cm} \text{(By the flow conservation property of $f'$ at $s$)}\\
    & & = k \hspace{1cm} \text{(By Eq.~\ref{eq:s-s'-flow})}
\end{eqnarray*}
Thus, we have shown that if $H_k$ has a flow of value $k+ \underset{(u, v)\in E(H)} \sum d(u, v)$, then $H$ has a feasible flow of value $k$.
\end{proof}

\begin{remark}\label{rmk:generalization}
        It is possible to generalize the above lemma to reduce the problem of finding a minimum cost feasible flow of value $k$ to the problem of finding a minimum cost maximum flow.
        Consider the flow network $(H, s, t, c, d, \widehat{c})$ where $\widehat{c}$ is the cost function. To find a minimum cost feasible flow of value $k$ in $H$, we construct $H_k$ as in the proof of Lemma~\ref{lem:feasible-flow-max-flow} with the cost function $\tilde{c}$ defined as follows. For each edge $(u, v) \in E(H)$, $\tilde{c}(u,v) = \widehat{c}(u,v)$ and for the remaining edges of $H_k$, the costs are zero. With this modification, from a flow $f$ in $H$, we will be able to get a flow $f'$ in $H_k$ in the same way as in the previous proof. Since the cost of all edges in $E(H_k) \setminus E(H)$ are zero, the cost of $f'$ now becomes $\underset{(u, v)\in E(H_k)}\sum \tilde{c}(u,v) f'(u, v)= \underset{(u, v)\in E(H)}\sum \widehat{c}(u,v) (f(u,v )- d(u, v))= \underset{(u, v)\in E(H)}\sum \widehat{c}(u,v) f(u,v )-  \underset{(u, v)\in E(H)} \sum \widehat{c}(u,v) d(u,v)$. Since the second term in the summation is independent of $f$, in order to find a minimum cost feasible flow of value $k$ in $H$, it suffices to find a minimum cost maximum flow in $H_k$.  
\end{remark}

\end{document}